\newcommand{\mathbbm}[1]{\text{\usefont{U}{bbm}{m}{n}#1}} 
\newcommand{\eqalign}[1]{\begin{equation}\begin{aligned}#1\end{aligned}\end{equation}}
\DeclareMathOperator{\Hom}{Hom}
\newtheorem{lemma}{Lemma}
\newtheorem{theorem}{Theorem}
\newtheorem{remark}{Remark}
\newtheorem{corollary}{Corollary}
\DeclareMathOperator{\Ker}{Ker}
\title{Entanglement Sum Rule from Higher-Form Symmetries}
\author[1]{Pei-Yao Liu\\ Email: \href{mailto:peiyaoliu@iphy.ac.cn}{peiyaoliu@iphy.ac.cn}}
\affil[1]{\textit{Institute of Physics, Chinese Academy of Sciences, Beijing 100190, China}}
\date{}
\begin{document}
\maketitle
\begin{abstract}
    We prove an entanglement sum rule for $(d-1)$-dimensional quantum lattice models with finite abelian higher-form symmetries, obtained by minimally coupling a sector on $p$-simplices carrying a $p$-form $G$ symmetry to a sector on $(p+1)$-simplices carrying the dual $(d-p-2)$-form $\widehat G$ symmetry (with $\widehat G$ being the Pontryagin dual of $G$). The coupling is introduced by conjugation with a symmetry-preserving operator $\mathcal{U}$ that dresses symmetry-invariant operators with appropriate Wilson operators. Our main result concerns symmetric eigenstates of the coupled model that arise by acting with $\mathcal{U}$ on direct-product symmetric eigenstates of the decoupled model: provided a topological criterion formulated via the Mayer--Vietoris sequence holds for the chosen bipartition, $\mathcal{U}$ factorizes across the cut when acting on the symmetric state, and the entanglement entropy equals the sum of the entropies of the two sectors. This framework explains and generalizes known examples in fermion-$\mathbb{Z}_2$ gauge theory, identifies when topology obstructs the sum rule, and provides a procedure to construct new examples by gauging higher-form symmetries.
\end{abstract}

\section{Introduction}
Entanglement sum rules capture when the entanglement entropy of a coupled quantum system decomposes additively into contributions from simpler subsystems. Besides offering a practical shortcut for computing entanglement, such additivity is a sharp diagnostic of how symmetry and topology restrict quantum correlations across a cut. They were discovered in models with free fermions coupled to $\mathbb{Z}_2$ gauge field: the entanglement entropy of the ground state equals the sum of fermionic and gauge field entanglements of the decoupled system \cite{Swingle2013,Yao_2010}. 

The original derivations relied on a 0-form symmetry of fermions (even parity) and a $(d-2)$-form symmetry of the $\mathbb{Z}_2$ gauge field (no gauge-flux), where $d$ is the dimension of spacetime. In general, a $p$-form symmetry acts on extended operators and can be generated by topological defects supported on codimension-$(p+1)$ submanifolds in spacetime \cite{Gaiotto_2015,McGreevy2023}. In lattice Hamiltonian realizations, the operators that implement these symmetries are labeled by $p$-cocycles or $(d-p-1)$-cycles of the dual complex, which are related by Poincar\'e duality. Minimal coupling can be implemented by dressing symmetric “matter” operators with Wilson operators. For the $p=0$ case, this is exactly the gauging map that promotes a global symmetry to a local one \cite{Haegeman_2015}. For the more general $p\ge 0$ case, the Wilson operator turns out to be the “bulk” of a $p$-boundary. These ingredients point to a purely homological mechanism for the entanglement sum rule.

\subsection{Main Results}
We establish a generalized entanglement sum rule for quantum lattice models with finite abelian higher-form symmetries. Consider a model with two decoupled sectors on a $(d-1)$-dimensional $\Delta$-complex $X$ ($d\ge 2$): a model on $p$-simplices with a $p$-form symmetry of a finite abelian group $G$, and a model on $(p+1)$-simplices with the dual $(d-p-2)$-form symmetry of the Pontryagin dual $\widehat{G}$ ($p=0,...,d-2$). We introduce minimal coupling between two sectors by conjugation with an operator $\mathcal{U}$ that dresses operators on $p$-simplices (or $(p+1)$-simplices) invariant under the $p$-form (or $(d-p-2)$-form) symmetries with Wilson operators built from $(p+1)$-simplices (or $p$-simplices). Conjugation with $\mathcal{U}$ preserves both higher-form symmetries. On the subspace invariant under both higher-form symmetries, $\mathcal{U}$ is well-defined and unitary, and the coupled Hamiltonian is related to the decoupled one by
\[
H \;=\; \mathcal{U}\,(H_p \otimes \mathbbm{1}_{p+1} + \mathbbm{1}_p \otimes H_{p+1})\,\mathcal{U}^\dagger.
\]
Our main theorem is that if a bipartition $(A,B)$ satisfies a Mayer--Vietoris criterion expressed in terms of homology groups for $A$, $B$, and their boundary, then $\mathcal{U}$ is equivalent to the tensor product of two local unitaries in the respective subregions when acting on a state with both symmetries. Consequently, for any direct-product symmetric eigenstate $\ket{\psi}_p\otimes\ket{\psi}_{p+1}$ of the decoupled model, the corresponding symmetric eigenstate of the minimally coupled model, $\mathcal{U}(\ket{\psi}_p\otimes\ket{\psi}_{p+1})$, obeys
\[
S_{\mathcal{H}_A}\left(\mathcal{U}(\ket{\psi}_p\otimes\ket{\psi}_{p+1})\right)
= S_{\mathcal{H}_{p,A}}(\ket{\psi}_p) + S_{\mathcal{H}_{p+1,A}}(\ket{\psi}_{p+1}).
\]
$S_*$ denotes the entanglement entropy. The details regarding how we decompose the Hilbert space with respect to a bipartition are given in Section \ref{subsec sum rule}. 

When the Mayer--Vietoris condition fails, topology obstructs the factorization of $\mathcal{U}$, and the sum rule can break down. If the higher-form symmetries are energetically enforced by stabilizer-like terms, the ground state subspace is spanned by symmetric eigenstates satisfying the entanglement sum rule. If the coupled model is viewed as a gauge theory, our result follows the extended-Hilbert-space notion of entanglement \cite{Aoki2015,Donnelly_2012}.

This framework recovers sum rules in fermion-$\mathbb{Z}_2$ gauge systems \cite{Swingle2013,Yao_2010} and demonstrates that the sum rule derives from symmetries rather than the detailed form of the Hamiltonian. This allows the construction of new examples by gauging higher-form symmetries in quantum lattice models \cite{Haegeman_2015, Choi_2025}.

\subsection{Outline}
Sections \ref{subsec AG}–\ref{subsec higher-form} collect the algebraic topology and symmetry preliminaries. Section \ref{subsec U} constructs the minimal coupling operator $\mathcal{U}$ and explains its relation to gauging. Section \ref{subsec sum rule} proves the factorization criterion and the sum rule. Section \ref{subsec review} revisits the fermion-$\mathbb{Z}_2$ gauge case through this lens. A new example is constructed in Section \ref{subsec new example} based on the $(3+1)$D transverse-field toric code.

\section{Proof of the Generalized Entanglement Sum Rule}\label{section 2}
\subsection{Simplicial (Co)homology: Preliminaries and Notation}\label{subsec AG}
This subsection introduces notation and recalls basic definitions used throughout the paper \cite{Hatcher}; it is not a comprehensive introduction. An $n$-simplex of a $\Delta$-complex $X$ is denoted by an ordered $(n+1)$-tuple $\sigma=[v_0,...,v_n]$ of its vertices. We denote the set of $n$-simplices of $X$ as $\Delta^n(X)$. The boundary map $\partial_n$ is defined as
\eqalign{
\partial_n [v_0,...,v_n]=\sum_{i=0}^n (-1)^i[v_0,...,\hat{v}_i,...,v_n].
}
The hat over $v_i$ means $v_i$ is removed from the tuple. $\partial_n$ maps an $n$-simplex to an alternating, formal sum of $(n-1)$-simplices. We set $\partial_0[v_0]=0$. It can be verified that $\partial_{n-1}\circ \partial_n=0$. 

Given an abelian group $G$, the chain group $C_n(X;G)$ consists of formal sums $\{\sum_{\sigma\in\Delta^n(X)} g_{\sigma} \sigma\mid g_{\sigma}\in G\}$. It is isomorphic to a direct sum $\bigoplus_{\sigma\in\Delta^{n}(X)} G$. In particular, $C_n(X;\mathbb{Z})$ is denoted simply by $C_n(X)$. For $n<0$ or $n>\dim X$, the chain group $C_n(X;G)$ is trivial (it only has the identity element). The boundary map $\partial_n$ induces a group homomorphism between two chain groups,
\eqalign{
\partial_n:C_n(X;G)\rightarrow C_{n-1}(X;G),\ \partial_n \left(\sum_{\sigma \in \Delta^n(X)} g_{\sigma}\sigma\right) = \sum_{\sigma \in \Delta^n(X)} g_{\sigma}\partial_n\sigma.
}
We define the $n$-cycles $Z_n(X;G)=\Ker\partial_n$ and $(n-1)$-boundaries $B_{n-1}(X;G)=\Im\partial_n$. So $\partial_{n}\circ \partial_{n+1}=0$ implies $B_n(X;G)\le Z_n(X;G)$. The $n$th homology group is defined as $H_n(X;G)=Z_n(X;G)/B_n(X;G)$.

Given an abelian group $G$, the homomorphisms from $C_n(X)$ to $G$ form an abelian group with component-wise addition; it is called the cochain group, denoted by $C^n(X;G)$. Given $f\in C^n(X;G)$, $f\circ \partial_{n+1} \in C^{n+1}(X;G)$. Thus, the boundary map $\partial_{n+1}: C_{n+1}(X)\rightarrow C_n(X)$ induces the coboundary map $\delta_n: C^n(X;G)\rightarrow C^{n+1}(X;G)$. It can be verified that $\delta_n$ is a homomorphism between cochain groups. 

We define the $n$-cocycles $Z^n(X;G)=\Ker\delta_n$ and $(n+1)$-coboundaries $B^{n+1}(X;G)=\Im\delta_{n}$. $\forall f\in C^{n-1}(X;G)$, $ \delta_{n}\circ\delta_{n-1} (f)=f \circ \partial_{n}\circ\partial_{n+1}=0$, thus $B^n(X;G)\le Z^n(X;G)$. The $n$th cohomology group is defined as $H^n(X;G)=Z^n(X;G)/B^n(X;G)$.

In the rest of the paper, we omit the subscripts of the boundary and coboundary maps when the source and target are clear from the context.

\subsection{Physical Setting and Symmetry Data}\label{subsec physical}
These assumptions apply in the rest of the paper unless stated otherwise. Let $d\ge 2$ be the spacetime dimension. The space $X$ is a finite $(d-1)$-dimensional $\Delta$-complex corresponding to an orientable closed manifold (a regular CW complex should also work, as it also admits a dual complex). The Hilbert space $\mathcal{H}$ is the tensor product of local Hilbert spaces on $ \Delta^p(X)$ and $ \Delta^{p+1}(X)$ ($p=0,...,d-2$),
\eqalign{
    &\mathcal{H}_{p}=\bigotimes_{\sigma\in \Delta^p(X)}\mathcal{H}_{\sigma},\\ &\mathcal{H}_{p+1}=\bigotimes_{\sigma\in \Delta^{p+1}(X)}\mathcal{H}_{\sigma},\\ &\mathcal{H}=\mathcal{H}_p\otimes\mathcal{H}_{p+1}.
}

$G$ is a finite abelian group. $\{U_{\sigma}(g)\mid g\in G\}$ is a unitary representation of $G$ on $\mathcal{H}_{\sigma},\ \sigma\in\Delta^p(X)$.

The Pontryagin dual $\widehat{G}= \Hom(G,\mathbb{R}/2\pi\mathbb{Z})$ is the group of irreducible representations of $G$. Addition in $\widehat{G}$ is defined component-wise,
\eqalign{
    &(\rho_1+\rho_2)(g)=\rho_1(g)+\rho_2(g).
}
Since $G$ is a finite abelian group, $G\cong \widehat{G}$ \cite{HewittRoss1979}. 

$\{\tilde{U}_{\sigma}(\rho )\mid \rho \in \widehat{G}\}$ is a unitary representation of $\widehat{G}$ on $\mathcal{H}_{\sigma},\ \sigma\in \Delta^{p+1}(X)$.
 
\subsection{Higher-Form Symmetries: Operators and Representations}\label{subsec higher-form}
Given $\phi \in C^p(X;G)$, define
\eqalign{
    U(\phi)=\prod_{\sigma \in\Delta^p(X)}U_{\sigma}(\phi(\sigma)).
}
The $p$-form $G$ symmetry transformations on $\mathcal{H}_p$ are defined as 
\eqalign{\label{p-form}
    \{U(\phi)\mid \phi\in Z^p(X;G)\}.
}
Physically, these operators represent conserved quantities associated with closed $(d-1-p)$-dimensional submanifolds in the dual complex \cite{Gaiotto_2015, Luo_2024, Sakura2023}, as seen via Poincar\'e duality: Let $\bar{X}$ be the dual complex of $X$. There exists a natural isomorphism $D:C^{n}(X;G)\rightarrow C_{d-1-n}(\bar{X};G),\ \phi\mapsto \bar{\phi}$ induced by intersections of simplices $\sigma\leftrightarrow\bar{\sigma}$. $D$ satisfies $D\circ\delta=\partial \circ D$. Thus, $D(Z^{p}(X;G))=Z_{d-1-p}(\bar{X};G)$, the symmetry transformations can be equivalently written as
\eqalign{\label{p-form2}
    \{\prod_{\bar{\sigma}\in \Delta^{d-1-p}(\bar{X})}U_{\sigma}(\bar{\phi}_{\bar{\sigma}})\mid \bar{\phi}\in Z_{d-1-p}(\bar{X};G)\}.
}
(Note that the symmetry transformations still act on $\mathcal{H}_p$; only the index set has been dualized. The same applies to later expressions involving dual complexes.) Indeed, the $(d-1-p)$-cycles may be viewed as closed $(d-1-p)$-dimensional submanifolds labeled by $G$ elements.

The projector onto an irreducible representation of the local unitary operators on $ \Delta^p(X)$ corresponding to an element of $C_p(X;\widehat{G})$ is defined as
\eqalign{
    &P(k)=\prod_{\sigma\in \Delta^p(X)}\frac{1}{\abs{G}}\sum_{g\in G} e^{-ik_{\sigma}(g)} U_{\sigma}(g).}
It is straightforward to verify that $P$ is a projector. Note that we don't require $\{U_{\sigma}(g)\mid g\in G\}$ to contain each irreducible representation of $G$, so $P(k)$ might equal 0 for $k\neq 0$. The projectors satisfy $\forall \phi \in C^p(X;G),$
\eqalign{
    U(\phi) P(k)&=\prod_{\sigma\in\Delta^p(X)} \frac{1}{\abs{G}}\sum_{g\in G}e^{-ik_{\sigma}(g)} U_{\sigma}(\phi(\sigma)g)\\
    &=\prod_{\sigma\in\Delta^p(X)} e^{ik_{\sigma}(\phi(\sigma))}\frac{1}{\abs{G}}\sum_{g\in G}e^{-ik_{\sigma}(g)} U_{\sigma}(g)\\
    &=e^{i\braket{k,\phi}}P(k),
}
where
\eqalign{
    \braket{\cdot,\cdot}: C_n(X;\widehat{G})\times C^n(X;G)\rightarrow \mathbb{R}/2\pi\mathbb{Z},\  (k,\phi)\mapsto \sum_{\sigma\in \Delta^n(X)}k_{\sigma}(\phi(\sigma)),
}
for $n=0,1,...,d-1$. $\braket{\cdot, \cdot}$ is a perfect pairing between $C_n(X;\widehat{G})$ and $C^n(X;G)$ and satisfies a generalized Stokes' theorem \cite{Munkres1984},
\eqalign{\label{Stokes}
\braket{\partial k,\phi}=\braket{k,\delta\phi}.
}
See Appendix \ref{appendixA} for more details.

It can be verified that $\sum_{k\in C_p(X;\widehat{G})}P(k)=1$ and $P(k_1)P(k_2)=\delta_{k_1,k_2}$ using the Schur orthogonality relations \cite{Serre1977},
\eqalign{
    &\frac{1}{\abs{G}}\sum_{\rho\in \widehat{G}} e^{-i\rho(g_1g_2^{-1})}=\delta_{g_1,g_2},\\
    &\frac{1}{\abs{G}}\sum_{g\in G} e^{i(\rho_1-\rho_2)(g)}=\delta_{\rho_1,\rho_2}.
}

\begin{lemma}\label{lemma1}
The subspace of $\mathcal{H}_p$ that is invariant under the $p$-form symmetry transformations \eqref{p-form} is the range of 
\eqalign{ \label{p-form invariant}
\sum_{k\in B_p(X;\widehat{G})}P(k).
}
\end{lemma}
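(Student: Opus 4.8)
The plan is to exploit the spectral decomposition of $\mathcal{H}_p$ furnished by the projectors $\{P(k)\}_{k\in C_p(X;\widehat{G})}$. Since these form a complete family of orthogonal projectors ($\sum_k P(k)=1$ with $P(k_1)P(k_2)=\delta_{k_1,k_2}P(k_1)$), they decompose $\mathcal{H}_p=\bigoplus_k \mathrm{Range}\,P(k)$ into mutually orthogonal eigenspaces, and any $\ket{\psi}$ splits as $\ket{\psi}=\sum_k P(k)\ket{\psi}$. The relation $U(\phi)P(k)=e^{i\braket{k,\phi}}P(k)$ then gives $U(\phi)\ket{\psi}=\sum_k e^{i\braket{k,\phi}}P(k)\ket{\psi}$, so each summand is an eigenvector of $U(\phi)$ with a phase controlled by the pairing.

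First I would characterize the invariant subspace componentwise. Because the summands $P(k)\ket{\psi}$ lie in orthogonal subspaces, the invariance condition $U(\phi)\ket{\psi}=\ket{\psi}$ for all $\phi\in Z^p(X;G)$ is equivalent to $e^{i\braket{k,\phi}}P(k)\ket{\psi}=P(k)\ket{\psi}$ for every $k$ and every cocycle $\phi$. Hence a nonzero component $P(k)\ket{\psi}$ can survive precisely when $\braket{k,\phi}=0$ in $\mathbb{R}/2\pi\mathbb{Z}$ for all $\phi\in Z^p(X;G)$, i.e. when $k$ lies in the annihilator $(Z^p)^\perp$ of the cocycle group with respect to the perfect pairing $\braket{\cdot,\cdot}$. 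Thus the invariant subspace is the range of $\sum_{k\in (Z^p)^\perp}P(k)$, and it remains to identify $(Z^p)^\perp$ with $B_p(X;\widehat{G})$. (Vanishing projectors $P(k)=0$ cause no trouble, as they contribute nothing to either description.)

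The core of the argument is the identity $(Z^p)^\perp=B_p(X;\widehat{G})$. The inclusion $B_p\subseteq (Z^p)^\perp$ is immediate from the generalized Stokes theorem \eqref{Stokes}: if $k=\partial c$, then $\braket{k,\phi}=\braket{\partial c,\phi}=\braket{c,\delta\phi}=0$ for every cocycle $\phi$. For the reverse inclusion I would argue by biduality rather than by an explicit construction. Stokes' theorem expresses that $\partial$ is adjoint to $\delta$ under the perfect pairing, so a cochain $\phi$ annihilates $B_p=\mathrm{Im}\,\partial$ iff $\braket{c,\delta\phi}=0$ for all $c$, iff $\delta\phi=0$ by perfectness; that is, $B_p^\perp=Z^p(X;G)$. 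Taking annihilators once more and invoking the double-annihilator property for a perfect pairing of finite abelian groups (all chain and cochain groups here are finite, being direct sums of copies of $G$ and $\widehat{G}$), one obtains $(Z^p)^\perp=(B_p^\perp)^\perp=B_p$. Combined with the previous paragraph, this identifies the invariant subspace with the range of $\sum_{k\in B_p(X;\widehat{G})}P(k)$.

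I expect the main obstacle to be the reverse inclusion $(Z^p)^\perp\subseteq B_p$, since the easy direction uses only Stokes whereas this one rests on finite-group Pontryagin duality: that $\braket{\cdot,\cdot}$ restricts to a perfect pairing in each degree and that $H^{\perp\perp}=H$ for subgroups. One must check these are genuinely available (finiteness of $C_n(X;G)$ and $C_n(X;\widehat{G})$, and perfectness as recorded in the preliminaries and Appendix \ref{appendixA}) before concluding. An alternative to biduality is an order count, comparing $\abs{(Z^p)^\perp}=\abs{C^p}/\abs{Z^p}=\abs{B^{p+1}(X;G)}$ with $\abs{B_p(X;\widehat{G})}$; but matching these orders ultimately appeals to the same duality, so the biduality route is the cleaner one.
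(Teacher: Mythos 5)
Your proposal is correct and follows essentially the same route as the paper: the paper's proof of Lemma \ref{lemma1} likewise decomposes $U(\phi)=\sum_k e^{i\braket{k,\phi}}P(k)$ over the complete orthogonal family $\{P(k)\}$, reduces invariance to the condition $\braket{k,\phi}=0 \bmod 2\pi$ for all $\phi\in Z^p(X;G)$, and defers the identification $(Z^p(X;G))^{\perp}=B_p(X;\widehat{G})$ to Appendix \ref{appendixA}. That appendix argues exactly as you do — the easy annihilator identity is proved directly via the generalized Stokes' theorem and perfectness, and the needed reverse inclusion follows from the double-annihilator property for perfect pairings of finite abelian groups.
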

\begin{proof}
    Since $\sum_{k\in C_p(X;\widehat{G})}P(k)=1$, $U(\phi)=\sum_{k\in C_p(X;\widehat{G})} e^{i\braket{k,\phi}} P(k)$. Thus, $\forall k\in C_p(X;\widehat{G})$, the range of $P(k)$ is invariant under all $p$-form symmetry transformations if and only if $\forall \phi \in Z^p(X;G)$, $\braket{k,\phi} = 0\mod 2\pi$. It is proved in Appendix \ref{appendixA} that this is equivalent to $k\in B_p(X;\widehat{G})$.
\end{proof}
A simple example regarding the 1-form $\mathbb{Z}_2$ symmetry, where $d=3$, is illustrated in Figure~\ref{fig:z2example}.
\begin{figure}[H]
    \centering
    \includegraphics[width=0.4\linewidth]{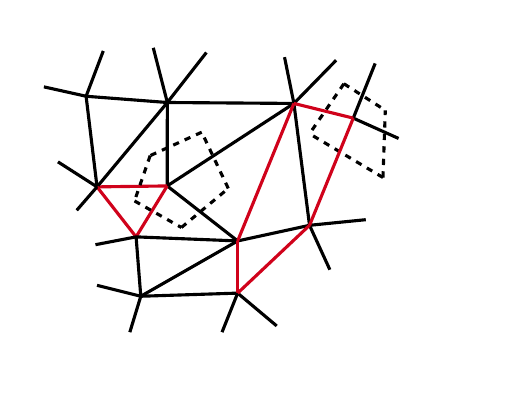}
    \caption{An example of the 1-form $\mathbb{Z}_2$ symmetry where $d=3$. The space $X$ is a two-dimensional complex. The closed black dashed lines on the dual complex represent $\phi\in Z^1(X;\mathbb{Z}_2)\cong Z_1(\bar{X};\mathbb{Z}_2)$: For a link $\sigma\in\Delta^1(X)$, $\phi(\sigma)=1$ if $\sigma$ crosses a dashed line, otherwise $\phi(\sigma)=0$. $\phi$ represents a 1-form symmetry transformation $U(\phi)=\prod_{\sigma\in\Delta^1(X)}U_{\sigma}(\phi(\sigma))$, where $U_{\sigma}(0)=U^2_{\sigma}(1)=1$. The red lines represent $k\in C_1(X;\mathbb{Z}_2)$: $k_{\sigma}=1$ if the link $\sigma$ is colored red, otherwise $k_{\sigma}=0$. $k$ represents an irreducible representation of local unitary operators, the corresponding projector is $P(k)=\prod_{\sigma\in\Delta^1(X)} P_{\sigma}(k_{\sigma})$. $P_{\sigma}(1)$ is the projection to the sign representation where $U_{\sigma}(0)=1, U_{\sigma}(1)=-1$; $P_{\sigma}(0)$ is the projection to the trivial representation where $U_{\sigma}(0)=U_{\sigma}(1)=1$. Then we know by acting $U(\phi)$ on the subspace that $P(k)$ projects onto, we get $e^{i\braket{k,\phi}}$ where $\braket{k,\phi}$ equals the number of intersections of the black dashed lines and red lines times $\pi$. Thus, $P(k)$ projects onto a subspace that is invariant under all 1-form symmetry transformations if and only if the red lines have an even number of intersections with all closed loops on the dual complex. Lemma \ref{lemma1} yields that this is equivalent to $k$ being a boundary, or equivalently, the red lines being the domain walls between two sets of plaquettes. This is indeed the case in the figure.}
    \label{fig:z2example}
\end{figure}
\begin{corollary}
    The irreducible representations of the $p$-form symmetry transformations are labeled by elements of $C_p(X;\widehat{G})/B_p(X;\widehat{G})$. The projector onto the subspace corresponding to $[k]\in C_p(X;\widehat{G})/B_p(X;\widehat{G})$ is
    \eqalign{
    \sum_{k'-k\in B_p(X;\widehat{G})} P(k').
    }
    It satisfies
    \eqalign{
    &U(\phi)\sum_{k'-k\in B_p(X;\widehat{G})} P(k')=e^{i\braket{k,\phi}}\sum_{k'-k\in B_p(X;\widehat{G})} P(k').
    }
    Since $(B^p(X;G))^{\perp}=Z_p(X;\widehat{G})$ (see Appendix \ref{appendixA}), an irreducible representation of the $p$-form symmetry transformations that is invariant under the “local” ones, $
    \{U(\phi)\mid \phi\in B^p(X;G)\}$, is labeled by a homology class $[k]\in Z_p(X;\widehat{G})/B_p(X;\widehat{G})=H_p(X;\widehat{G})$.
\end{corollary}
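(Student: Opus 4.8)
The plan is to simultaneously diagonalize the commuting family $\{U(\phi)\mid\phi\in Z^p(X;G)\}$ using the orthogonal resolution of the identity $\{P(k)\}_{k\in C_p(X;\widehat G)}$ already established, and then to read off the labels of the joint eigenspaces from the pairing $\braket{\cdot,\cdot}$ together with the orthogonality results of Appendix~\ref{appendixA}.

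First I would recall that $U(\phi)=\sum_{k\in C_p(X;\widehat G)}e^{i\braket{k,\phi}}P(k)$ and that the $P(k)$ are mutually orthogonal projectors summing to the identity. Hence each $U(\phi)$ acts as the scalar $e^{i\braket{k,\phi}}$ on the range of $P(k)$, so the range of every $P(k)$ is a simultaneous eigenspace of the whole abelian family, carrying the one-dimensional character $\chi_k:\phi\mapsto e^{i\braket{k,\phi}}$ of $Z^p(X;G)$. Two labels $k,k'$ define the same character on $Z^p(X;G)$ exactly when $\braket{k-k',\phi}=0\bmod 2\pi$ for every $\phi\in Z^p(X;G)$, i.e. when $k-k'\in(Z^p(X;G))^\perp$; by the orthogonality computation of Appendix~\ref{appendixA} (the same fact used in Lemma~\ref{lemma1}) this orthogonal complement is precisely $B_p(X;\widehat G)$. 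Therefore the inequivalent irreducible representations are indexed by the cosets $[k]\in C_p(X;\widehat G)/B_p(X;\widehat G)$, and the projector onto the $[k]$-isotypic subspace is the sum of all $P(k')$ with $k'\equiv k$, namely $\sum_{k'-k\in B_p(X;\widehat G)}P(k')$. The displayed eigenvalue relation then follows by applying $U(\phi)$ termwise: for $k'-k\in B_p(X;\widehat G)$ and $\phi\in Z^p(X;G)$ one has $\braket{k',\phi}=\braket{k,\phi}$, so each summand contributes the common phase $e^{i\braket{k,\phi}}$. The dual identity $(B_p(X;\widehat G))^\perp=Z^p(X;G)$, which follows from the Stokes relation \eqref{Stokes} (since $\braket{\partial m,\phi}=\braket{m,\delta\phi}=0$ when $\delta\phi=0$) and perfectness of the pairing, guarantees that this phase is independent of the chosen representative, so the eigenvalue is well defined on the quotient.

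For the final assertion I would impose invariance under the ``local'' subgroup $\{U(\phi)\mid\phi\in B^p(X;G)\}$. A representation labeled by $[k]$ is invariant under these operators iff its character is trivial there, i.e. $e^{i\braket{k,\phi}}=1$ for all $\phi\in B^p(X;G)$, equivalently $k\in(B^p(X;G))^\perp$. Invoking the identity $(B^p(X;G))^\perp=Z_p(X;\widehat G)$ quoted in the statement, this selects exactly the representatives $k\in Z_p(X;\widehat G)$; since $B_p(X;\widehat G)\le Z_p(X;\widehat G)$ and the condition $\braket{k,\phi}=0$ for $\phi\in B^p(X;G)$ is constant on $B_p$-cosets (because $\braket{\partial m,\delta\psi}=\braket{m,\delta\delta\psi}=0$), the admissible labels form the subquotient $Z_p(X;\widehat G)/B_p(X;\widehat G)=H_p(X;\widehat G)$, as claimed.

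I expect the only substantive point—rather than a genuine obstacle—to be the two orthogonality statements $(Z^p(X;G))^\perp=B_p(X;\widehat G)$ and $(B^p(X;G))^\perp=Z_p(X;\widehat G)$, which are exactly where the topology (the perfect pairing and Stokes' theorem \eqref{Stokes}) enters. These are precisely the facts relegated to Appendix~\ref{appendixA} and already used in Lemma~\ref{lemma1}, so granting them the remainder is bookkeeping with the orthogonal projectors $P(k)$ and the bilinear pairing $\braket{\cdot,\cdot}$.
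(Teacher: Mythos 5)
Your proof is correct and follows essentially the same route the paper intends: the paper states this corollary without a separate proof, as an immediate consequence of the decomposition $U(\phi)=\sum_{k}e^{i\braket{k,\phi}}P(k)$, the orthogonality of the $P(k)$, and the Appendix~\ref{appendixA} identities $(Z^p(X;G))^{\perp}=B_p(X;\widehat{G})$ and $(B^p(X;G))^{\perp}=Z_p(X;\widehat{G})$, which are exactly the ingredients you assemble. Your additional observations (well-definedness of the phase on $B_p$-cosets via Stokes' theorem, and faithfulness of the labeling from the reverse inclusion $(Z^p(X;G))^{\perp}\subseteq B_p(X;\widehat{G})$) are accurate fillings-in of details the paper leaves implicit.
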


Given $k=\sum_{\sigma\in\Delta^{p+1}(X)}k_{\sigma}\sigma\in C_{p+1}(X;\widehat{G})$, define
\eqalign{
    \tilde{U}(k)=\prod_{\sigma\in\Delta^{p+1}(X)}\tilde{U}_{\sigma}(k_{\sigma}).
}
The “dual” $(d-p-2)$-form $\widehat{G}$ symmetry transformations in $\mathcal{H}_{p+1}$ are defined as (compare with \eqref{p-form2})
\eqalign{\label{(d-p-2)-form}
    \{\tilde{U}(k)\mid k \in Z_{p+1}(X;\widehat{G})\}.
}
Similarly, we have an isomorphism $D':C_n(X;\widehat{G})\rightarrow C^{d-1-n}(\bar{X};\widehat{G}),\ k\mapsto \bar{k}$ via Poincar\'e duality. And $D'(Z_{p+1}(X;\widehat{G}))=Z^{d-p-2}(\bar{X};\widehat{G})$. Thus, these symmetry transformations can be equivalently written as (compare with \eqref{p-form})
\eqalign{
    \{\prod_{\bar{\sigma}\in \Delta^{d-p-2}(\bar{X})}\tilde{U}_{\sigma}(\bar{k}(\bar{\sigma}))\mid \bar{k}\in Z^{d-p-2}(\bar{X};\widehat{G})\}.
}
Similar to $P(k)$, we can define the projector onto an irreducible representation of the local unitaries on $ \Delta^{p+1}(X)$ corresponding to $\phi'\in C^{p+1}(X;G)$,
\eqalign{
    &\tilde{P}(\phi')=\abs{G}^{-\abs{ \Delta^{p+1}(X)}}\sum_{k'\in C_{p+1}(X;\widehat{G})}e^{-i\braket{k',\phi'}} \tilde{U}(k'),\\
    &\forall k'\in C_{p+1}(X;\widehat{G}),\ \tilde{U}(k') \tilde{P}(\phi')=e^{i\braket{k',\phi'}}\tilde{P}(\phi').
}
Since $(Z_{p+1}(X;\widehat{G}))^{\perp}=B^{p+1}(X;G)$ (see Appendix \ref{appendixA}), the projector onto the subspace invariant under all $(d-p-2)$-form $\widehat{G}$ symmetry transformations is
\eqalign{\label{d-p-2 invariant}
\sum_{\phi'\in B^{p+1}(X;G)} \tilde{P}(\phi').
}
We denote the subspace of $\mathcal{H}$ that is invariant under the $p$-form \eqref{p-form} and $(d-p-2)$-form \eqref{(d-p-2)-form} symmetry transformations as $\mathcal{H}_{\text{inv}}$. The projector onto $\mathcal{H}_{\text{inv}}$ is
\eqalign{
P_{\text{inv}}=\sum_{k\in B_p(X;\widehat{G})} P(k)\sum_{\phi'\in B^{p+1}(X;G)} \tilde{P}(\phi').
}

\subsection{The Minimal Coupling Operator and Gauging}\label{subsec U}
We define the minimal coupling operator that couples $\mathcal{H}_{p}$ and $\mathcal{H}_{p+1}$,
\eqalign{\label{unitary}
    \mathcal{U} = \sum_{k=\partial k'\in B_p(X;\widehat{G})} P(k) \tilde{U}(k').
}
Note that $\mathcal{U}$ commutes with the $p$-form and $(d-p-2)$-form symmetry transformations. The choice of the preimage $k'$ of $k$ is not unique. If $\partial k' = \partial k'' = k$, then $\tilde{U}(k'')$ and $\tilde{U}(k')$ differ by $\tilde{U}(k'-k'')$. Since $k''-k'\in Z_{p+1}(X;\widehat{G})$, $\tilde{U}(k'-k'')$ is a $(d-p-2)$-form symmetry transformation \eqref{(d-p-2)-form}. Thus, $\mathcal{U}$ is well-defined (independent of choices of $k'$) when restricted to the subspace invariant under the $(d-p-2)$-form symmetry transformations. Since $\mathcal{U}\mathcal{U}^{\dagger}=\mathcal{U}^{\dagger}\mathcal{U}=\sum_{k\in B_p(X;\widehat{G})} P(k)$, by Lemma \ref{lemma1}, $\mathcal{U}$ is unitary when restricted to the subspace invariant under the $p$-form symmetry transformations. In summary, when restricted to $\mathcal{H}_{\text{inv}}$, $\mathcal{U}$ is a well-defined unitary operator.

To see the connection between $\mathcal{U}$ and the minimal coupling prescription in lattice gauge theory, we may view $\mathcal{H}_p$ as the matter field and $\mathcal{H}_{p+1}$ as the gauge field. To define gauge transformations corresponding to the gauged $p$-form symmetry, we need one more assumption in this subsection: Each irreducible representation of $\{\tilde{U}_{\sigma}(\rho )\mid \rho \in\widehat{G}\}$ is contained and has the same multiplicity in $\mathcal{H}_{\sigma}$, $\forall\sigma\in\Delta^{p+1}(X)$. This assumption is only for the interpretation from the lattice gauge theory perspective, but is unnecessary for the entanglement sum rule.

We now define the gauge transformations corresponding to the gauged $p$-form symmetry. For a $(p+1)$-simplex $\sigma\in \Delta^{p+1}(X)$, since the irreducible representations of $\widehat{G}$ are assumed to have the same multiplicity in $\mathcal{H}_{\sigma}$, we can define (non-canonically) $\tilde{T}_{\sigma}(g)$ that sends the subspace corresponding to $g'\in G$ to that corresponding to $gg'\in G$, so that
\eqalign{
    \tilde{T}_{\sigma}(g) \tilde{U}_{\sigma}(\rho)=e^{-i\rho(g)}\tilde{U}_{\sigma}(\rho)\tilde{T}_{\sigma}(g).
}
For example, when $\widehat{G}\cong G = \mathbb{Z}_2$, we may take $T_{\sigma}(0)=\text{id}_{\sigma}$, $T_{\sigma}(1)=\sigma^z_{\sigma}$, $U_{\sigma}(0)=\text{id}_{\sigma}$, $U_{\sigma}(1)=\sigma^x_{\sigma}$. 

For $\phi'\in C^{p+1}(X;G)$, define $\tilde{T}(\phi')=\prod_{\sigma\in \Delta^{p+1}(X)} \tilde{T}_{\sigma} (\phi'(\sigma))$, then
\eqalign{\label{heisenberg}
    \tilde{T}(\phi') \tilde{U}(k')=e^{-i\braket{k',\phi'}}\tilde{U}(k')\tilde{T}(\phi').
}
Actually, $\{\tilde{T}(\phi')\mid \phi'\in Z^{p+1}(X;G)\}$ are the closed 't Hooft operators; $\{\tilde{U}(k')\mid k'\in Z_{p+1}(X;\widehat{G})\}$ are the closed Wilson operators.

The gauge transformations are
\eqalign{\label{gauge transformation}
    \{U(\phi)\tilde{T}(\delta \phi)\mid \phi\in C^p(X;G)\}.
}
Note that $\phi$ is no longer required to be closed.

Next, we see how conjugation with $\mathcal{U}$ transforms an operator on $\mathcal{H}_p$ commuting with the $p$-form symmetry transformations into an operator on $\mathcal{H}_p\otimes\mathcal{H}_{p+1}$ commuting with the gauge transformations. Each operator on $\mathcal{H}_p$ can be expanded as
\eqalign{
    O = \sum_{k\in C_p(X;\widehat{G})}P(k) O \sum_{k\in C_p(X;\widehat{G})}P(k)=\sum_{q\in C_p(X;\widehat{G})} \left[\sum_{k\in C_p(X;\widehat{G})}P(k) O P(k-q)\right]=\sum_{q\in C_p(X;\widehat{G})} O_q.
}
$O_q$ shifts the $p$-chain label of the representation by $q$: $\forall k \in C_p(X;\widehat{G}),\ P(k)\,O_q=O_q\,P(k-q)$. $\forall \phi\in Z^p(X;G),\ q\in C_p(X;\widehat{G})$, we have
\eqalign{\label{component}
    U(\phi) O_q &=U(\phi)\sum_{k\in C_p(X;\widehat{G})} P(k) O_q= \sum_{k\in C_p(X;\widehat{G})}e^{i\braket{k,\phi}} P(k) O_q\\
    &= O_q\sum_{k\in C_p(X;\widehat{G})}e^{i\braket{k,\phi}} P(k-q)=e^{i\braket{q,\phi}} O_q U(\phi).
}
From $(Z^p(X;G))^{\perp} = B_p(X;\widehat{G})$ (see Appendix \ref{appendixA}), $O_q$ commutes with each $p$-form symmetry transformation if and only if $q\in B_p(X;\widehat{G})$. By straightforward calculations,

\eqalign{\label{gauged}
    \mathcal{U}O\mathcal{U}^{\dagger} = \sum_{q\in C_p(X;\widehat{G}) }\mathcal{U} O_q \mathcal{U}^{\dagger} =\sum_{q=\partial q'\in B_p(X;\widehat{G})} O_q \tilde{U}(q') \sum_{k\in B_p(X;\widehat{G})} P(k).
}
We now verify that $\mathcal{U} O \mathcal{U}^{\dagger}$ commutes with all gauge transformations. Using Stokes' theorem \eqref{Stokes}, the commutation relation \eqref{heisenberg}, and the fact that $\tilde{T}(\delta\phi)$ commutes with operators acting trivially on $\mathcal{H}_{p+1}$,
\eqalign{\label{commute}
    &U(\phi)\tilde{T}(\delta\phi) \mathcal{U} O \mathcal{U}^{\dagger}\\
    =&\sum_{k=\partial k' \in B_p(X;\widehat{G})} \left[e^{i\braket{k,\phi}-i\braket{k',\delta\phi}}P(k) \tilde{U}(k')\right] \tilde{T}(\delta \phi)O \mathcal{U}^{\dagger}\\
    =&\sum_{k=\partial k' \in B_p(X;\widehat{G})} \left[e^{i\braket{k,\phi}-i\braket{\partial k',\phi}}P(k) \tilde{U}(k')\right]\tilde{T}(\delta \phi)O \mathcal{U}^{\dagger}\\
    =& \mathcal{U} O\tilde{T}(\delta\phi)\mathcal{U}^{\dagger}= \mathcal{U}O \sum_{k=\partial k' \in B_p(X;\widehat{G})}  \tilde{T}(\delta\phi)P(k) \tilde{U}(-k')\\
    =& \mathcal{U}O \sum_{k=\partial k' \in B_p(X;\widehat{G})} P(k) \tilde{U}(-k') e^{-i\braket{k,\phi}+i\braket{k',\delta\phi} } U(\phi)\tilde{T}(\delta\phi)\\
    =&\mathcal{U}  O \mathcal{U}^{\dagger}U(\phi)\tilde{T}(\delta \phi).
}
Indeed, $\mathcal{U}$ transforms $O$ into a gauge-invariant operator. In $\mathcal{U}O\mathcal{U}^{\dagger}$ \eqref{gauged}, every symmetric component, $O_q$, of the bare matter operator $O$ is preserved and dressed by the Wilson operator $\tilde{U}(q')$ corresponding to a $(p+1)$-chain, whose boundary is the change of the irreducible representation of local unitaries induced by $O_q$, which is exactly the lattice form of minimal coupling. The choice of preimage $q'$ is not unique. Suppose $\partial q' = \partial q''=q$, then their corresponding Wilson operators differ by $\tilde{U}(q'-q'')$. Since $q'-q''\in Z_{p+1}(X;\widehat{G})$, $\tilde{U}(q'-q'')$ is a $(d-p-2)$-form symmetry transformation \eqref{(d-p-2)-form}. Thus, in the subspace invariant under the $(d-p-2)$-form symmetry transformations, the choices of $q'$ are equivalent. This non-uniqueness of minimal coupling is familiar in gauge theories \cite{Haegeman_2015, Jenkins_2013}. For example, in fermion-$\mathbb{Z}_2$ gauge theory, to make $c^{\dagger}_{v_1} c_{v_2}$ invariant under the generators of gauge transformations, $(-1)^{c^{\dagger}_v c_v}\prod_{l\ni v}\sigma^x_l$, we can dress it with a product of $\sigma^z_l$ operators along any path from $v_2$ to $v_1$, as illustrated in Figure~\ref{fig:ambiguity}.
\begin{figure}
    \centering
    \includegraphics[width=0.6\linewidth]{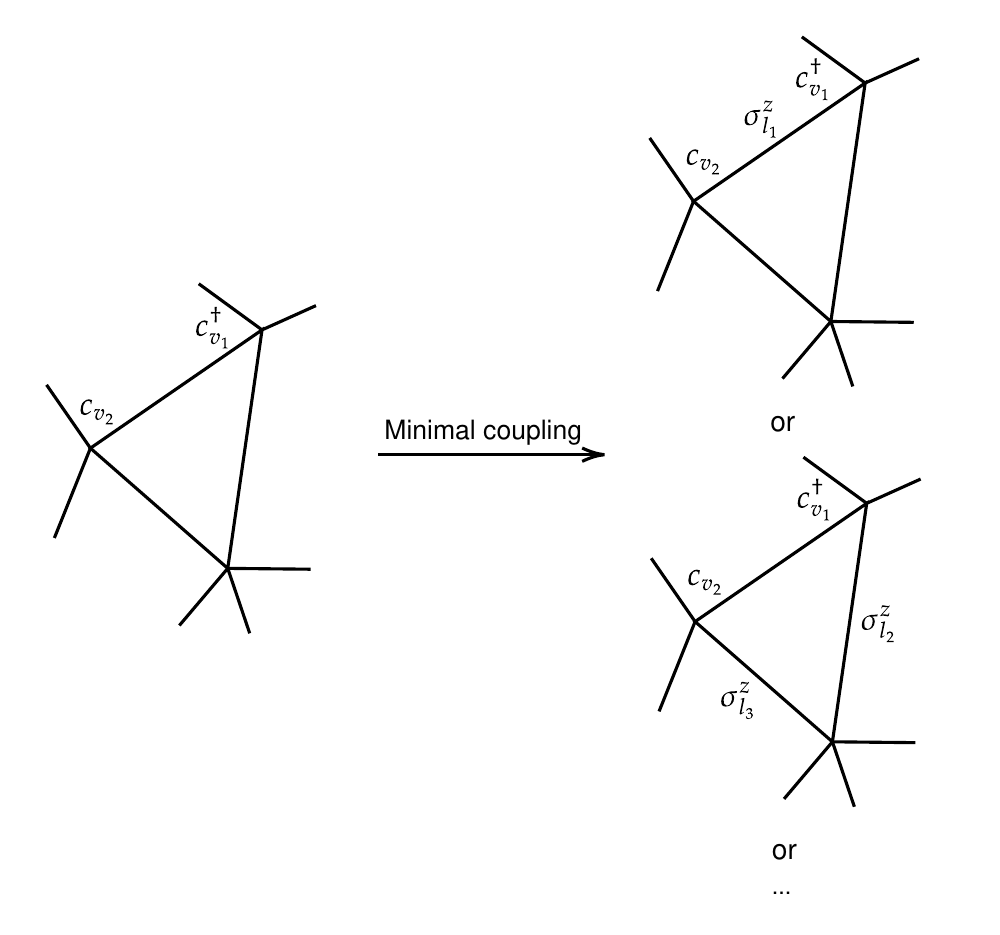}
    \caption{An example of the ambiguity of minimal coupling in fermion-$\mathbb{Z}_2$ gauge theory.}
    \label{fig:ambiguity}
\end{figure}
But under the restriction of no $\mathbb{Z}_2$ flux (both local and global), all choices of paths become equivalent.

It was introduced in \cite{Haegeman_2015} how to promote a global (0-form) symmetry of an operator to a local (gauge) symmetry by gauging. For a $p$-form $G$ symmetry on $\mathcal{H}_p$, we can achieve this through conjugation with $\mathcal{U}$.

To begin with, we enlarge the Hilbert space $\mathcal{H}_p$ to $\mathcal{H}_p\otimes \mathcal{H}_{p+1}$, where $\mathcal{H}_{p+1}$ is the tensor product of canonical representations of $G$ on each $(p+1)$-simplex,
\eqalign{
\tilde{T}_{\sigma}(g')\ket{g}_{\sigma} = \ket{g'g}_{\sigma},\ \tilde{U}_{\sigma}(\rho)\ket{g}_{\sigma}=e^{i\rho(g)}\ket{g}_{\sigma},
 \ \sigma\in\Delta^{p+1}(X).
}
To gauge a state $\ket{\psi}_p\in \mathcal{H}_p$, we embed it in the enlarged Hilbert space $\mathcal{H}_p\otimes\mathcal{H}_{p+1}$, and then sum over all gauge transformations \eqref{gauge transformation},
\eqalign{
\hat{G}\ket{\psi}_p=\frac{1}{\abs{G}^{\abs{ \Delta^{p}(X)}}}\sum_{\phi\in C^{p}(X;G)} U(\phi)\tilde{T}(\delta\phi)\left(\ket{\psi}_p \bigotimes_{\sigma\in\Delta^{p+1}(X)} \ket{\text{id}}_{\sigma}\right),
}
where $\text{id}$ denotes the identity element of $G$. (The gauging map $\hat{G}$ has a small hat over $G$ to distinguish it from the Pontryagin dual $\widehat{G}$.) Note that 
\eqalign{
&\frac{1}{\abs{G}^{\abs{ \Delta^{p}(X)}}}\sum_{\phi\in C^{p}(X;G)} U(\phi)\tilde{T}(\delta\phi)\\
=&\frac{1}{\abs{G}^{\abs{ \Delta^{p}(X)}}}\sum_{\phi\in C^{p}(X;G)} U(\phi)\tilde{T}(\delta\phi)\frac{1}{\abs{Z^p(X;G)}}\sum_{\phi\in Z^p(X;G)} U(\phi)\\
=&\frac{1}{\abs{G}^{\abs{ \Delta^{p}(X)}}}\sum_{\phi\in C^{p}(X;G)} U(\phi)\tilde{T}(\delta\phi)\sum_{k\in B_p(X;\widehat{G})} P(k).
}
Thus, the component of $\ket{\psi}_p$ orthogonal to the subspace invariant under the $p$-form symmetry is projected out by $\hat{G}$.

The gauged version of an operator $O$ on $\mathcal{H}_p$ is denoted by $\mathcal{G}[O]$. $\mathcal{G}[O]$ should commute with all gauge transformations \eqref{gauge transformation} and be compatible with $\hat{G}$,
\eqalign{
\hat{G} O\ket{\psi}_p=\mathcal{G}[O]\hat{G}\ket{\psi}_p
}
for any $\ket{\psi}_p$ with the $p$-form symmetry \cite{Haegeman_2015}. We will show that $\mathcal{G}[O]=\mathcal{U} O \mathcal{U}^{\dagger}$ satisfies these conditions.

We have demonstrated that $O=\sum_{q\in C_p(X;\widehat{G})}O_q$ where $O_q=\sum_{k\in C_p(X;\widehat{G})}P(k)OP(k-q)
$. Since $\hat{G}$ contains a projection onto the subspace invariant under the $p$-form symmetry transformations, $O_q\ket{\psi}_p$ for $q\notin B_p(X;\widehat{G})$ is projected out. Thus,
\eqalign{
\hat{G}O\ket{\psi}_p=&\frac{1}{\abs{G}^{\abs{ \Delta^{p}(X)}}}\sum_{\phi\in C^{p}(X;G)} U(\phi)\tilde{T}(\delta\phi)\left(O\ket{\psi}_p\bigotimes_{\sigma\in\Delta^{p+1}(X)}\ket{\text{id}}_{\sigma}\right)\\
=&\frac{1}{\abs{G}^{\abs{ \Delta^{p}(X)}}}\sum_{\phi\in C^{p}(X;G)} U(\phi)\tilde{T}(\delta\phi)\sum_{q\in B_p(X;\widehat{G})}O_q\left(\ket{\psi}_p\bigotimes_{\sigma\in\Delta^{p+1}(X)}\ket{\text{id}}_{\sigma}\right).
}
Since $\tilde{U}(q')$ acts trivially on $\bigotimes_{\sigma\in\Delta^{p+1}(X)}\ket{\text{id}}_{\sigma}$ and $\sum_{k\in B_p(X;\widehat{G})} P(k)$ acts trivially on $\ket{\psi}_p$,
\eqalign{
&\hat{G}O\ket{\psi}_p\\
=&\frac{1}{\abs{G}^{\abs{ \Delta^{p}(X)}}}\sum_{\phi\in C^{p}(X;G)} U(\phi)\tilde{T}(\delta\phi)\sum_{q=\partial q'\in B_p(X;\widehat{G})} O_q \tilde{U}(q')\sum_{k\in B_p(X;\widehat{G})} P(k)\left(\ket{\psi}_p\bigotimes_{\sigma\in\Delta^{p+1}(X)}\ket{\text{id}}_{\sigma}\right)\\
=&\frac{1}{\abs{G}^{\abs{ \Delta^{p}(X)}}}\sum_{\phi\in C^{p}(X;G)} U(\phi)\tilde{T}(\delta\phi)\mathcal{U} O \mathcal{U}^{\dagger}\left(\ket{\psi}_p\bigotimes_{\sigma\in\Delta^{p+1}(X)}\ket{\text{id}}_{\sigma}\right).
}
Since $\mathcal{U} O \mathcal{U}^{\dagger}$ commutes with all gauge transformations \eqref{commute},
\eqalign{
\hat{G} O\ket{\psi}_p&=\mathcal{U} O \mathcal{U}^{\dagger}\frac{1}{\abs{G}^{\abs{ \Delta^{p}(X)}}}\sum_{\phi\in C^{p}(X;G)} U(\phi)\tilde{T}(\delta\phi)\left(\ket{\psi}_p\bigotimes_{\sigma\in\Delta^{p+1}(X)}\ket{\text{id}}_{\sigma}\right)\\
&=\mathcal{U} O \mathcal{U}^{\dagger} \hat{G}\ket{\psi}_p.
}
Indeed, $\mathcal{U}$ gauges the operator $O$ with the $p$-form symmetry.

Alternatively, we may straightforwardly extend the original construction \cite{Haegeman_2015} to the $p$-form $G$ symmetry,
\eqalign{
\mathcal{G}'[O]=\frac{1}{\abs{Z^p(X;G)}}\sum_{\phi\in C^p(X;G)} 
   U(\phi)\,\tilde{T}(\delta\phi)
   \left(
      O \bigotimes_{\sigma\in \Delta^{p+1}(X)} 
      \ket{\text{id}}\bra{\text{id}}_{\sigma}
   \right)
   U^{\dagger}(\phi)\,\tilde{T}^{\dagger}(\delta\phi).
}
 By setting $p=0$, we return to the original construction for a finite abelian global symmetry. Note that the original construction also works for non-abelian groups \cite{Haegeman_2015}. It can be verified that $\mathcal{G}'[O]\hat{G}\ket{\psi}_p = \hat{G} O\ket{\psi}_p$ if $\ket{\psi}_p$ is invariant under $p$-form symmetry transformations. However, there are some subtle differences between the two prescriptions. Consider an arbitrary operator $O=\sum_{q\in C_p(X;\widehat{G})}O_q$ on $\mathcal{H}_p$,
\eqalign{
\mathcal{G}'[O]
=&\frac{1}{\abs{Z^p(X;G)}}\sum_{\phi\in C^p(X;G)} U(\phi)\tilde{T}(\delta\phi) \left(O\bigotimes_{\sigma\in\Delta^{p+1}(X)} \ket{\text{id}}\bra{\text{id}}_{\sigma}\right)U^{\dagger}(\phi)\tilde{T}^{\dagger}(\delta\phi)\\
=&\frac{1}{\abs{Z^p(X;G)}}\sum_{q\in C_p(X;\widehat{G})}\sum_{\phi\in C^p(X;G)} e^{i\braket{q,\phi}}O_q\otimes\ket{\delta\phi}\bra{\delta\phi}_{p+1}.
}
For $q\notin B_p(X;\widehat{G})$, since $(Z^p(X;G))^{\perp}=B_p(X;\widehat{G})$,
\eqalign{
&\sum_{\phi\in C^p(X;G)} e^{i\braket{q,\phi}}O_q\otimes\ket{\delta\phi}\bra{\delta\phi}_{p+1}\\
=&\sum_{\phi\in C^p(X;G)}e^{i\braket{q,\phi}}\frac{1}{\abs{Z^p(X;G)}}\sum_{\phi_1\in Z^p(X;G)} e^{i\braket{q,\phi_1}} O_q\otimes\ket{\delta\phi}\bra{\delta\phi}_{p+1}=0.
}
Thus,
\eqalign{
\mathcal{G}'[O]=&\frac{1}{\abs{Z^p(X;G)}}\sum_{q=\partial q'\in B_p(X;\widehat{G})} \sum_{\phi\in C^p(X;G)} e^{i\braket{q',\delta\phi}} O_q\otimes\ket{\delta\phi}\bra{\delta\phi}_{p+1}\\
=&\sum_{q=\partial q'\in B_p(X;\widehat{G})} O_q \tilde{U} (q')\sum_{\phi'\in B^{p+1}(X;G)}\ket{\phi'}\bra{\phi'}_{p+1}\\
=&\sum_{q=\partial q'\in B_p(X;\widehat{G})} O_q \tilde{U} (q')\sum_{\phi'\in B^{p+1}(X;G)}\tilde{P}(\phi').
}
The last term $\sum_{\phi'\in B^{p+1}(X;G)}\tilde{P}(\phi')$ is the projector onto the subspace invariant under the $(d-p-2)$-form symmetry transformations. Comparing with \eqref{gauged}, we observe that the two prescriptions are equivalent when restricted to the symmetric subspace $\mathcal{H}_{\text{inv}}$. 

Another nice property of the minimal coupling operator $\mathcal{U}$ is that the matter-gauge duality appears explicitly when restricted to $\mathcal{H}_{\text{inv}}$.
We switch the roles of $\mathcal{H}_p$ and $\mathcal{H}_{p+1}$ via Poincar\'e duality and define 
\eqalign{
    \bar{\mathcal{U}}=\sum_{\bar{\phi}'=\partial \bar{\phi}\in B_{d-p-2}(\bar{X};G)}\tilde{P}(\phi')U(\phi)=\sum_{\phi'=\delta\phi\in B^{p+1}(X;G)}\tilde{P}(\phi')U(\phi).
}
It is verified in Appendix \ref{appendixB} that $\mathcal{U}=\bar{\mathcal{U}}$ in $\mathcal{H}_{\text{inv}}$. Consequently, given an operator on $\mathcal{H}_{p+1}$ with the $(d-p-2)$-form $\widehat{G}$ symmetry, conjugation by $\mathcal{U}$ dresses it with the corresponding Wilson operator on $\mathcal{H}_p$.

\subsection{Factorization of $\mathcal{U}$ and Entanglement Sum Rule}\label{subsec sum rule}
\begin{figure}[H]
    \centering
    \includegraphics[width=0.9\linewidth]{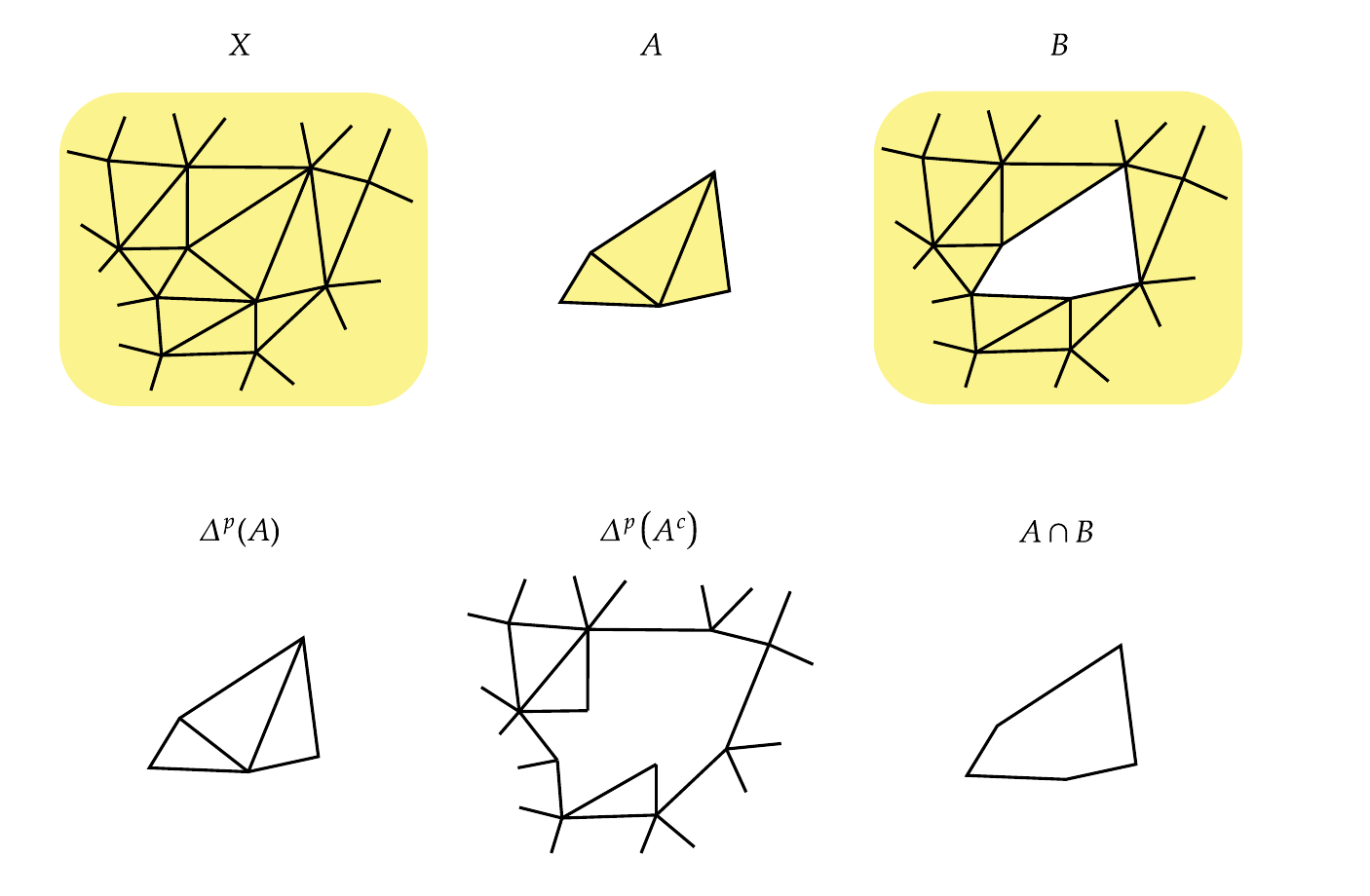}
    \caption{An illustration of the bipartition in the case where $d=3$ and $p=1$. The 1-simplices are the links. The 2-simplices are the triangles colored yellow.}
    \label{fig:factorization}
\end{figure}
Next, let's specify how we decompose the Hilbert space when calculating the entanglement entropy. Let $A$ be a $(p+1)$-dimensional subcomplex of $X$. Let $B$ be the $(p+1)$-dimensional subcomplex of $X$ whose $(p+1)$-simplices are $ \Delta^{p+1}(X)\setminus \Delta^{p+1}(A)$. $A\cap B$ is the boundary consisting of $p$-simplices. $\Delta^p(A^c)= \Delta^p(X)\setminus \Delta^p(A)$. Denote the $p$-chains generated by $ \Delta^p(A^c)$ as $C_p (A^c;\widehat{G})$, so that $C_p(X;\widehat{G})=C_p(A;\widehat{G})\oplus C_p(A^c;\widehat{G})$. 

Let $\mathcal{H}_A$ be the tensor product of the local Hilbert spaces on all $p$- and $(p+1)$-simplices in $A$, and $\mathcal{H}_{A^c}$ be the tensor product of the local Hilbert spaces on other $p$- and $(p+1)$-simplices. Then $\mathcal{H}=\mathcal{H}_{A}\otimes\mathcal{H}_{A^c}$, where
\eqalign{
    &\mathcal{H}_{A}=\mathcal{H}_{p,A}\otimes \mathcal{H}_{p+1,A}= (\bigotimes_{\sigma\in\Delta^p(A)}\mathcal{H}_{\sigma})\otimes (\bigotimes_{\sigma\in\Delta^{p+1}(A)}\mathcal{H}_{\sigma}),\\
    &\mathcal{H}_{A^c}=\mathcal{H}_{p,A^c}\otimes \mathcal{H}_{p+1,B}= (\bigotimes_{\sigma\in\Delta^p(A^c)}\mathcal{H}_{\sigma})\otimes (\bigotimes_{\sigma\in\Delta^{p+1}(B)}\mathcal{H}_{\sigma}).
}
A special case, where $d=3$ and $p=1$, is illustrated in Figure~\ref{fig:factorization}. 

Given a state $\ket{\psi}\in\mathcal{H}$, the reduced density matrix on $\mathcal{H}_A$ is $\rho_{\mathcal{H}_A}=\Tr_{\mathcal{H}_{A^c}}(\ket{\psi}\bra{\psi})$, and the entanglement entropy is
\eqalign{
S_{\mathcal{H}_A} (\ket{\psi})= -\Tr_{\mathcal{H}_A}(\rho_{\mathcal{H}_A} \ln \rho_{\mathcal{H}_A}).
}

Given a state $\ket{\psi}_p\in \mathcal{H}_p$, the reduced density matrix on $\mathcal{H}_{p,A}$ is $\rho_{\mathcal{H}_{p,A}} = \Tr_{\mathcal{H}_{p,A^c}}(\ket{\psi}_p\bra{\psi}_p)$, and the entanglement entropy is
\eqalign{
S_{\mathcal{H}_{p,A}}(\ket{\psi}_p) = -\Tr_{\mathcal{H}_{p,A}}(\rho_{\mathcal{H}_{p,A}} \ln \rho_{\mathcal{H}_{p,A}}).
}

Given a state $\ket{\psi}_{p+1}\in \mathcal{H}_{p+1}$, the reduced density matrix on $\mathcal{H}_{p+1,A}$ is $\rho_{\mathcal{H}_{p+1,A}} = \Tr_{\mathcal{H}_{p+1,B}}(\ket{\psi}_{p+1}\bra{\psi}_{p+1})$, and the entanglement entropy is
\eqalign{
S_{\mathcal{H}_{p+1,A}}(\ket{\psi}_{p+1}) = -\Tr_{\mathcal{H}_{p+1,A}}(\rho_{\mathcal{H}_{p+1,A}} \ln \rho_{\mathcal{H}_{p+1,A}}).
}

Given the $(p+1)$-subcomplexes $A$ and $B$, we have an exact sequence,
\eqalign{
&H_{p}(A\cap B;\widehat{G})\overset{\begin{bmatrix}
    i_A^*\\
    -i_B^*
\end{bmatrix}
}{\longrightarrow} H_p(A;\widehat{G})\oplus H_p(B;\widehat{G})\overset{\begin{bmatrix}j_A^*,j_B^*\end{bmatrix}}{\longrightarrow}H_p(A\cup B;\widehat{G}).
}
This is a segment of the Mayer--Vietoris sequence \cite{Hatcher}. $i_A^*$ and $i_B^*$ are induced by inclusion maps from $A\cap B$ to $A$ and $B$. $j_A^*$ and $j_B^*$ are induced by inclusion maps from $A$ and $B$ to $A\cup B$ \cite{Hatcher}.

We say $\Im(\begin{bmatrix}
    i_A^*\\
    -i_B^*
\end{bmatrix})$ is the graph of an isomorphism if there is an isomorphism
\eqalign{\kappa:\Im(i_A^*)\rightarrow \Im(-i_B^*),}
with $\Im(\begin{bmatrix}
    i_A^*\\
    -i_B^*
\end{bmatrix})=\{(x,\kappa(x))\mid x\in \Im(i_A^*)\}$.

\begin{theorem}\label{theorem1}
If $\Im(\begin{bmatrix}
    i_A^*\\
    -i_B^*
\end{bmatrix})$ is the graph of an isomorphism, $\mathcal{U} P_{\text{inv}}=(\mathcal{U}_{A}\otimes \mathcal{U}_{A^c} )P_{\text{inv}}$, with $\mathcal{U}_A$ and $\mathcal{U}_{A^c}$ acting on $\mathcal{H}_A$ and $\mathcal{H}_{A^c}$ respectively. (The expressions of $\mathcal{U}_A$ and $\mathcal{U}_{A^c}$ will be given in the proof.)
\end{theorem}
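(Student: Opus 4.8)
The plan is to split every factor of $\mathcal{U}$ according to the tensor decomposition $\mathcal{H}=\mathcal{H}_A\otimes\mathcal{H}_{A^c}$ and to show that, on $\mathcal{H}_{\text{inv}}$, the only piece that fails to factorize---a projector living on the shared boundary $A\cap B$---can be reassigned to a single side. First I would rewrite the coupling operator as a sum over all of $C_{p+1}(X;\widehat{G})$,
\[
\mathcal{U}\,P_{\text{inv}}=\frac{1}{|Z_{p+1}(X;\widehat{G})|}\sum_{k'\in C_{p+1}(X;\widehat{G})}P(\partial k')\,\tilde{U}(k')\,P_{\text{inv}},
\]
which is legitimate because $\tilde{U}(z)P_{\text{inv}}=P_{\text{inv}}$ for $z\in Z_{p+1}(X;\widehat{G})$ (the $(d-p-2)$-form invariance, via $\braket{z,\delta\phi}=\braket{\partial z,\phi}=0$), so each coset of $Z_{p+1}$ contributes identically. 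Splitting $k'=k'_A+k'_B$ with $k'_A\in C_{p+1}(A;\widehat{G})$ and $k'_B\in C_{p+1}(B;\widehat{G})$, the Wilson factor splits as $\tilde{U}_A(k'_A)\otimes\tilde{U}_B(k'_B)$, and the projector $P(\partial k')$ factors over $\Delta^p(A^c)$, the interior $p$-simplices of $A$, and the shared boundary $A\cap B$. On the interior and on $\Delta^p(A^c)$ the representation label is supplied entirely by one side; the obstruction is the boundary factor $P_{A\cap B}\big((\partial k'_A)|_{A\cap B}+(\partial k'_B)|_{A\cap B}\big)$, which lives in $\mathcal{H}_A$ yet depends on $k'_B$, hence on $\mathcal{H}_{A^c}$.

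Next I would reduce factorization to a purely homological splitting problem. Because the $A\cap B$ $p$-simplices belong to $\mathcal{H}_A$, the prospective $\mathcal{U}_{A^c}$ cannot act on them, so I must choose, for each sector $k\in B_p(X;\widehat{G})$, a preimage decomposition $s(k)=s(k)_A+s(k)_B$ with $s(k)_A$ depending only on $k|_A$, with $s(k)_B$ depending only on $k|_{A^c}$, and with the boundary contribution of $s(k)_B$ on $A\cap B$ pushed entirely onto the $A$-side. The freedom available is exactly the section ambiguity $s(k)\mapsto s(k)+z$, $z\in Z_{p+1}(X;\widehat{G})$, which shifts the $A\cap B$ contribution by $\partial z_A=-\partial z_B$; by exactness of the Mayer--Vietoris sequence the homology classes of these reachable shifts fill $\operatorname{Im}(\partial_{\mathrm{MV}})=\ker i_A^*\cap\ker i_B^*$, the boundary classes that bound simultaneously in $A$ and in $B$.

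The crux is to show that the graph-of-isomorphism hypothesis is exactly what makes this reassignment possible. I would first record that $\operatorname{Im}\!\left[\begin{smallmatrix}i_A^*\\-i_B^*\end{smallmatrix}\right]$ is the graph of an isomorphism if and only if $\ker i_A^*=\ker i_B^*$, i.e.\ a $p$-cycle on $A\cap B$ bounds in $A$ precisely when it bounds in $B$. I would then argue that this equality is equivalent to the required chain-level statement: for every $k\in B_p(X;\widehat{G})$ one has $k|_A\in B_p(A;\widehat{G})$ and, symmetrically, $k|_{A^c}$ bounds a $(p+1)$-chain of $B$ whose boundary avoids $A\cap B$. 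Failure of the inclusion $\ker i_B^*\subseteq\ker i_A^*$ produces a boundary cycle $w$ supported on $A\cap B$ realizing a sector $k=w\in B_p(X;\widehat{G})$ whose restriction $k|_A=w$ is not a boundary in $A$, obstructing the split; the other inclusion controls the $A^c$-side symmetrically. Conversely, $\ker i_A^*=\ker i_B^*$ lets me define consistent sections $s_A$ on $k|_A$ and $s_B$ on $k|_{A^c}$, and I would set
\[
\mathcal{U}_A=\sum_{k_A}P_A(k_A)\,\tilde{U}_A\big(s_A(k_A)\big),\qquad
\mathcal{U}_{A^c}=\sum_{k_{A^c}}P_{A^c}(k_{A^c})\,\tilde{U}_B\big(s_B(k_{A^c})\big),
\]
and verify $\mathcal{U}\,P_{\text{inv}}=(\mathcal{U}_A\otimes\mathcal{U}_{A^c})P_{\text{inv}}$ by matching sectors, using the orthogonality $P(k_1)P(k_2)=\delta_{k_1,k_2}$ and section-independence on $\mathcal{H}_{\text{inv}}$.

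I expect the main obstacle to be precisely this chain-level-to-homology translation. The naive restriction $(\partial k'_B)|_{A\cap B}$ need not be a cycle, so the passage from the operator identity to statements about $H_p(A\cap B)$, $H_p(A)$, and $H_p(B)$ must be routed through the relative homology of the pairs $(A,A\cap B)$ and $(B,A\cap B)$ and their connecting homomorphisms, keeping careful track of which boundary chains on $A\cap B$ are honestly realizable as $\partial$ of $(p+1)$-chains on each side. Proving that $\ker i_A^*=\ker i_B^*$ is simultaneously necessary and sufficient---and that the resulting $s_A,s_B$ are well-defined independently of all intermediate choices---is where the real work lies; the remaining verifications are bookkeeping with the projector orthogonality and the commutation relations already established in Section~\ref{subsec U}.
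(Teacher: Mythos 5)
Your overall architecture---sector-by-sector splitting, independent Wilson-chain assignments $s_A$, $s_B$, and Mayer--Vietoris bookkeeping---matches the paper's, and several intermediate observations are correct: the averaging over preimages is valid on $\mathcal{H}_{\text{inv}}$, the interface projector is indeed the obstruction, $\Im(\begin{bmatrix}i_A^*\\-i_B^*\end{bmatrix})$ is a graph of an isomorphism iff $\Ker i_A^*=\Ker i_B^*$, and $\Im(\partial_{\mathrm{MV}})=\Ker i_A^*\cap \Ker i_B^*$. But your central reduction rests on a false claim: that $\Ker i_A^*=\Ker i_B^*$ is equivalent to the chain-level statement that every $k\in B_p(X;\widehat{G})$ has $k_A\in B_p(A;\widehat{G})$ and $k_{A^c}$ bounding in $B$ away from $A\cap B$. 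The hypothesis does not imply this. Take $p=0$, $G=\mathbb{Z}_2$, $X$ connected, and $A$, $B$ connected with nonempty intersection (exactly the setting of Section \ref{subsec review}): the image of $\begin{bmatrix}i_A^*\\-i_B^*\end{bmatrix}$ is the diagonal $\mathbb{Z}_2\subset\mathbb{Z}_2\oplus\mathbb{Z}_2$, hence a graph of an isomorphism, yet $k=v_1+v_2$ with $v_1$ an interior vertex of $A$ and $v_2$ a vertex of $A^c$ lies in $B_0(X;\mathbb{Z}_2)$ while $k_A=v_1$ has odd parity and bounds nothing in $A$. This ``both-odd'' parity sector is precisely a case the theorem must cover, so sections with $\partial s_A(k_A)=k_A$ and $\partial s_B(k_{A^c})=k_{A^c}$ simply do not exist, and the sufficiency half of your argument cannot be completed. (Your necessity sketch has the same flaw: it tacitly assumes the split forces each restriction to bound, e.g.\ that $\partial s_B(0)=0$; in any case necessity is not part of the theorem.)

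The repair is to demand only $\partial\bigl(s_A(k_A)+s_B(k_{A^c})\bigr)=k$, allowing the two Wilson chains to carry canceling boundary pieces supported on $A\cap B$---which is legitimate, since a Wilson operator on $(p+1)$-simplices of $B$ (inside $\mathcal{H}_{A^c}$) may perfectly well have boundary on the interface $p$-simplices even though those sit in $\mathcal{H}_A$. This is what the paper constructs: a sign-odd section $h$ of $\partial$ over $B_{p-1}(A\cap B;\widehat{G})$ turns $k_A-h(\partial k_A)$ and $k_{A^c}-h(\partial k_{A^c})$ into cycles (the relative-homology bookkeeping you deferred to ``where the real work lies''); Mayer--Vietoris exactness places the resulting pair of classes in $\Im(\begin{bmatrix}i_A^*\\-i_B^*\end{bmatrix})$; and the graph-of-isomorphism hypothesis is used not to kill these classes but to produce a single interface cycle $t_A([k_A-h(\partial k_A)])=t_B([k_{A^c}-h(\partial k_{A^c})])$ computable from either side alone. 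Then $k_A-h(\partial k_A)-t_A(\cdots)$ bounds in $A$ via some $f_1(k_A)$, $k_{A^c}-h(\partial k_{A^c})+t_B(\cdots)$ bounds in $B$ via some $f_2(k_{A^c})$, the interface corrections cancel in the sum so that $\partial(f_1(k_A)+f_2(k_{A^c}))=k$, and the factorized operators are assembled exactly as in your final display with $f_1,f_2$ in place of $s_A,s_B$. In the $p=0$ example the needed correction is the boundary vertex $v_0$ fixing the odd-odd sector, which your scheme cannot produce.
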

Intuitively, this theorem states that when restricted to $\mathcal{H}_{\text{inv}}$, $\mathcal{U}$ effectively factorizes to each subregion. Thus, we can independently add minimal coupling in $A$ and $B$ subregions, preventing entanglement generation across the cut and leading to the sum rule.

\begin{proof}
    Since $C_p(X;\widehat{G})=C_p(A;\widehat{G})\oplus C_p(A^c;\widehat{G})$, $k\in C_p(X;\widehat{G})$ has a unique decomposition $k_A + k_{A^c}$. For each $k\in B_p(X;\widehat{G})$, we aim to independently find $f_1(k_A)\in C_{p+1}(A;\widehat{G})$ and $f_2(k_{A^c})\in C_{p+1}(B;\widehat{G})$ such that $\partial(f_1(k_A)+f_2(k_{A^c}))=k$. The functions $f_1$ and $f_2$ will be used to construct $\mathcal{U}_A$ and $\mathcal{U}_{A^c}$. 
    
    Suppose $k_A+k_{A^c} = \partial k'$ for some $k'\in C_{p+1}(X;\widehat{G})$. Since $C_{p+1}(X;\widehat{G})=C_{p+1}(A;\widehat{G})\oplus C_{p+1}(B;\widehat{G})$, $k'$ can also be uniquely decomposed as $k'_A+k'_B$. Then $k_A- \partial k'_A= -(k_{A^c}-\partial k'_B)\in C_p(A\cap B;\widehat{G})$. Thus, $\partial k_A= -\partial k_{A^c}=\partial(k_A-\partial k'_A)\in B_{p-1}(A\cap B;\widehat{G})$.

    We can define 
    \eqalign{h:B_{p-1}(A\cap B;\widehat{G})\rightarrow C_{p}(A\cap B;\widehat{G})}
    by choosing a preimage for each $(p-1)$-boundary, so that $\partial\circ h=\text{id}$. Although $h$ may not be a homomorphism, we can make our choices to ensure $h(-k'')=-h(k'')$, so that $h(\partial k_A)+h(\partial k_{A^c})=0$. Then $\partial(k_A - h(\partial k_A))=0$ and $\partial( k_{A^c}-h(\partial k_{A^c}))=0$, and the corresponding homology classes are $[k_A-h(\partial k_A)]\in H_p(A;\widehat{G})$, $[ k_{A^c}-h(\partial k_{A^c})]\in H_p(B;\widehat{G})$.

    Since $[k_A-h(\partial k_A)+k_{A^c}-h(\partial k_{A^c})]=[k]=0\in H_p(A\cup B;\widehat{G})$, 
    $$([k_A-h(\partial k_A)],[k_{A^c}-h(\partial k_{A^c})])\in\Ker[j_A^*, j_B^*]=\Im \begin{bmatrix}
        i_A^*\\
        -i_B^*
    \end{bmatrix}$$
    Thus, we may choose a $p$-cycle $t([k_A-h(\partial k_A)],[ k_{A^c}-h(\partial k_{A^c})])\in Z_p(A\cap B;\widehat{G})$ representing a preimage of $([k_A-h(\partial k_A)],[ k_{A^c}-h(\partial k_{A^c})])$ under $\begin{bmatrix}
        i_A^*\\
        -i_B^*
    \end{bmatrix}$. Since $\Im(\begin{bmatrix}
        i_A^*\\
        -i_B^*
    \end{bmatrix})$ is the graph of an isomorphism, $t([k_A-h(\partial k_A)],[ k_{A^c}-h(\partial k_{A^c})])$ is determined by either $[k_A-h(\partial k_A)]$ or $[ k_{A^c}-h(\partial k_{A^c})]$, with the other fixed by the isomorphism. Thus, we can define
    \eqalign{
    &t_A: \Im(i_A^*)\rightarrow Z_p(A\cap B;\widehat{G}),\\
    &t_B: \Im(-i_B^*)\rightarrow Z_p(A\cap B;\widehat{G}).
    }
    And $\forall k\in B_p(X;\widehat{G})$, $t([k_A-h(\partial k_A)],[ k_{A^c}-h(\partial k_{A^c})])=t_A([k_A-h(\partial k_A)])=t_B([k_{A^c}-h(\partial k_{A^c})])$. Then $k_A-h(\partial k_A)-t_A([k_A-h(\partial k_A)])\in B_p(A;\widehat{G})$ and $k_{A^c}-h(\partial k_{A^c})+t_B([k_{A^c}-h(\partial k_{A^c})])\in B_p(B;\widehat{G})$.
    
    Similarly, we can define the maps
    \eqalign{
        &l_A: B_p(A;\widehat{G}) \rightarrow C_{p+1}(A;\widehat{G}),\\
        &l_B: B_p(B;\widehat{G}) \rightarrow C_{p+1}(B;\widehat{G}),
        }
    by choosing a preimage for each $p$-boundary. 
    Then we can define the desired $f_1$ and $f_2$:
    \eqalign{
        &f_1: C'_p(A;\widehat{G})\rightarrow C_{p+1}(A;\widehat{G}),\ f_1(k_A)=l_A(k_A-h(\partial k_A)-t_A([k_A-h(\partial k_A)]),\\
        &f_2: C'_p(A^c;\widehat{G})\rightarrow C_{p+1}(B;\widehat{G}),\ f_2(k_{A^c})=l_B(k_{A^c}-h(\partial k_{A^c})+t_B([k_{A^c}-h(\partial k_{A^c})]),
    }
    where
    \eqalign{
    &C_p'(A;\widehat{G}) = \{k_A\in C_p(A;\widehat{G})\mid \partial k_A\in B_p(A\cap B;\widehat{G}), [k_A-h(\partial k_A)]\in \Im(i_A^*)\},\\
    &C_p'(A^c;\widehat{G}) = \{k_{A^c}\in C_p(A^c;\widehat{G})\mid \partial k_{A^c}\in B_p(A\cap B;\widehat{G}),[k_{A^c}-h(\partial k_{A^c})]\in \Im(-i_B^*)\}.
    }
    We have demonstrated that $\forall k=k_A+k_{A^c}\in B_p(X;\widehat{G})$, $k_A\in C_p'(A;\widehat{G})$, and $k_{A^c}\in C_p'(A^c;\widehat{G})$. Then
    \eqalign{
        \partial(f_1(k_A)+f_2(k_{A^c}))&=\partial(l_A(k_A-h(\partial k_A))+ l_B(k_{A^c}-h(\partial k_{A^c})))\\
        &=k_A+k_{A^c}-h(\partial k_A)-h(\partial k_{A^c})=k.
    }
    An example of how $f_1$ and $f_2$ work for the $p=1$ case is shown in Figure~\ref{fig:sum rule}.
    \begin{figure}[H]
        \centering
        \includegraphics[width=0.9\linewidth]{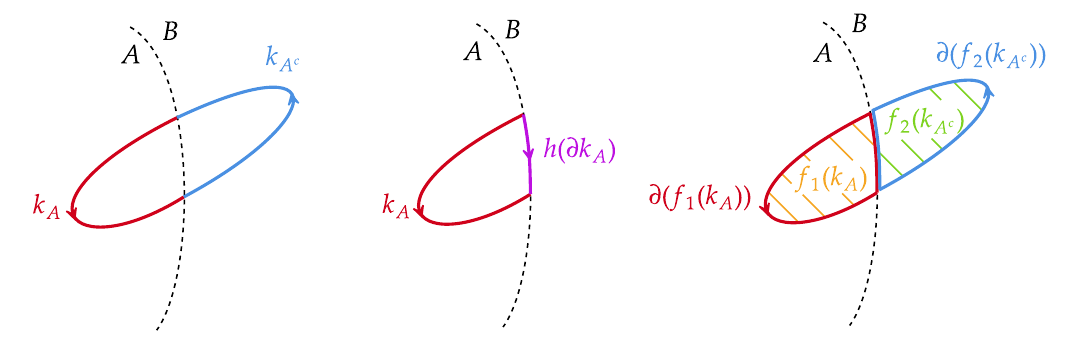}
        \caption{An illustration of $f_1$ and $f_2$ in a special case where $p=1$. $k=k_A+k_{A^c}$ is depicted on the left. In this case $t_A([k_A-h(\partial k_A)])=t_B([k_{A^c}-h(\partial k_{A^c})])=0$, since $k_A-h(\partial k_A)$ and $k_{A^c}-h(\partial k_{A^c})$ are already boundaries in $A$ and $B$ respectively.}
        \label{fig:sum rule}
    \end{figure}
    Using $f_1$ and $f_2$, we can define
    \eqalign{
        &\mathcal{U}_A = \sum_{k_A\in C_p'(A;\widehat{G})}P_A(k_A)\prod_{\sigma\in\Delta^{p+1}(A)}\tilde{U}_{\sigma}(f_1(k_A)_{\sigma}),\\
        &\mathcal{U}_{A^c}= \sum_{k_{A^c}\in C'_p(A^c;\widehat{G})}P_{A^c}(k_{A^c})\prod_{\sigma\in\Delta^{p+1}(B)}\tilde{U}_{\sigma}(f_2(k_{A^c})_{\sigma}),
    }
    where
    \eqalign{
        &P_A(k_A)=\prod_{\sigma\in \Delta^p(A)}\frac{1}{\abs{G}}\sum_{g\in G} e^{-ik_{A\sigma}(g)} U_{\sigma}(g),\\
        &P_{A^c}(k_{A^c})=\prod_{\sigma\in \Delta^p(A^c)}\frac{1}{\abs{G}}\sum_{g\in G} e^{-ik_{A^c\sigma}(g)} U_{\sigma}(g).
    }
    By Lemma \ref{lemma1}, $P_A(k_A)P_{A^c}(k_{A^c}) P_{\text{inv}}=0$ if $k_A+k_{A^c}\notin B_p(X;\widehat{G})$. Then, it is straightforward to check $\mathcal{U}P_{\text{inv}}=(\mathcal{U}_{A}\otimes \mathcal{U}_{A^c})P_{\text{inv}}$ by expanding both sides of the equality. Note that different choices of $h$, $t_A$, $t_B$, $l_A$, and $l_B$ are equivalent to different choices of the preimage of $k$ in $\mathcal{U}$, which are equivalent in $\mathcal{H}_{\text{inv}}$.
\end{proof}
\begin{remark}
    In particular, if $H_p(A;\widehat{G})$ and $H_p(B;\widehat{G})$ are trivial, the condition that $\Im(\begin{bmatrix}
    i_A^*\\
    -i_B^*
\end{bmatrix})$ is the graph of an isomorphism is vacuously true.
\end{remark}

Let $H_p$ be a Hamiltonian on $\mathcal{H}_p$ with the $p$-form $G$ symmetry, and $H_{p+1}$ be a Hamiltonian on $\mathcal{H}_{p+1}$ with the $(d-p-2)$-form $\widehat{G}$ symmetry. $H_0=H_p \otimes \mathbbm{1}_{p+1} + \mathbbm{1}_p \otimes H_{p+1}$. When restricted to the symmetric subspace $\mathcal{H}_{\text{inv}}$, $\mathcal{U}$ is a well-defined (independent of choices of $k'$) unitary operator and $H=\mathcal{U} H_0 \mathcal{U}^{\dagger}$ is the minimally coupled Hamiltonian. Thus, $\left\{\mathcal{U}\left(\ket{\psi}_p\otimes \ket{\psi}_{p+1}\right)\right\}$ are symmetric eigenstates of $H$ that span $\mathcal{H}_{\text{inv}}$, where $\left\{\ket{\psi}_p\right\}$ and $\left\{\ket{\psi}_{p+1}\right\}$ are symmetric eigenstates of $H_p$ and $H_{p+1}$ respectively. The following theorem yields the entanglement sum rule of such eigenstates of $H$.
\begin{theorem}\label{theorem2}
    For $\ket{\psi}_p\otimes\ket{\psi}_{p+1}\in\mathcal{H}_{\text{inv}}$, if $\Im(\begin{bmatrix}
    i_A^*\\
    -i_B^*
\end{bmatrix})$ is the graph of an isomorphism,
$$S_{\mathcal{H}_A}\left(\mathcal{U}\left(\ket{\psi}_p\otimes \ket{\psi}_{p+1}\right)\right)=S_{\mathcal{H}_{p,A}}\left(\ket{\psi}_p\right)+S_{\mathcal{H}_{p+1,A}}\left(\ket{\psi}_{p+1}\right).$$
\end{theorem}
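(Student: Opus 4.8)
The plan is to reduce the statement to two standard facts: entanglement entropy is invariant under operators that factor across the cut, and von Neumann entropy is additive on tensor products. Write $\ket{\chi}=\ket{\psi}_p\otimes\ket{\psi}_{p+1}$. Because $\ket{\chi}\in\mathcal{H}_{\text{inv}}$ we have $P_{\text{inv}}\ket{\chi}=\ket{\chi}$, so Theorem~\ref{theorem1} gives
$$\mathcal{U}\ket{\chi}=\mathcal{U}P_{\text{inv}}\ket{\chi}=(\mathcal{U}_A\otimes\mathcal{U}_{A^c})\ket{\chi},$$
with $\mathcal{U}_A$ supported on $\mathcal{H}_A$ and $\mathcal{U}_{A^c}$ on $\mathcal{H}_{A^c}$. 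First I would argue that the right-hand operator preserves the entanglement entropy across the $\mathcal{H}_A\mid\mathcal{H}_{A^c}$ cut, reducing the claim to computing $S_{\mathcal{H}_A}(\ket{\chi})$; then I would evaluate that entropy directly.

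For the first step, the subtlety is that $\mathcal{U}_A$ and $\mathcal{U}_{A^c}$ are only partial isometries, not full unitaries. Since $P_A(k_A)$ acts on $\mathcal{H}_{p,A}$ while the Wilson factor $\prod_{\sigma\in\Delta^{p+1}(A)}\tilde{U}_\sigma(f_1(k_A)_\sigma)$ acts on $\mathcal{H}_{p+1,A}$, the two commute, and the orthogonality $P_A(k_A)P_A(k_A')=\delta_{k_A,k_A'}P_A(k_A)$ yields
$$\mathcal{U}_A^\dagger\mathcal{U}_A=\sum_{k_A\in C_p'(A;\widehat{G})}P_A(k_A),\qquad \mathcal{U}_{A^c}^\dagger\mathcal{U}_{A^c}=\sum_{k_{A^c}\in C_p'(A^c;\widehat{G})}P_{A^c}(k_{A^c}),$$
so each acts unitarily on the range of the corresponding projector. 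I would then show that $\ket{\chi}$ lives in those ranges. Invariance of $\ket{\psi}_p$ under the $p$-form symmetry (Lemma~\ref{lemma1}) gives $\sum_{k\in B_p(X;\widehat{G})}P(k)\ket{\psi}_p=\ket{\psi}_p$; using the factorization $P(k)=P_A(k_A)P_{A^c}(k_{A^c})$ together with the containment $k_A\in C_p'(A;\widehat{G})$, $k_{A^c}\in C_p'(A^c;\widehat{G})$ for every $k\in B_p(X;\widehat{G})$ (established inside the proof of Theorem~\ref{theorem1}), the reduced density matrices $\rho_{\mathcal{H}_A}$ and $\rho_{\mathcal{H}_{A^c}}$ are supported within the ranges of $\mathcal{U}_A^\dagger\mathcal{U}_A$ and $\mathcal{U}_{A^c}^\dagger\mathcal{U}_{A^c}$. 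Taking the Schmidt decomposition $\ket{\chi}=\sum_i\sqrt{\lambda_i}\ket{a_i}_A\ket{b_i}_{A^c}$, the vectors $\ket{a_i},\ket{b_i}$ lie in these ranges, so $\{\mathcal{U}_A\ket{a_i}\}$ and $\{\mathcal{U}_{A^c}\ket{b_i}\}$ remain orthonormal and $(\mathcal{U}_A\otimes\mathcal{U}_{A^c})\ket{\chi}$ carries the same Schmidt coefficients $\lambda_i$. Hence $S_{\mathcal{H}_A}(\mathcal{U}\ket{\chi})=S_{\mathcal{H}_A}(\ket{\chi})$.

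It then remains to evaluate $S_{\mathcal{H}_A}(\ket{\chi})$ for the product state. Under the identifications $\mathcal{H}_p=\mathcal{H}_{p,A}\otimes\mathcal{H}_{p,A^c}$, $\mathcal{H}_{p+1}=\mathcal{H}_{p+1,A}\otimes\mathcal{H}_{p+1,B}$, and $\mathcal{H}_{A^c}=\mathcal{H}_{p,A^c}\otimes\mathcal{H}_{p+1,B}$, the trace over $\mathcal{H}_{A^c}$ splits into a trace over $\mathcal{H}_{p,A^c}$ acting only on $\ket{\psi}_p\bra{\psi}_p$ and a trace over $\mathcal{H}_{p+1,B}$ acting only on $\ket{\psi}_{p+1}\bra{\psi}_{p+1}$, so that
$$\rho_{\mathcal{H}_A}=\Tr_{\mathcal{H}_{A^c}}\big(\ket{\chi}\bra{\chi}\big)=\rho_{\mathcal{H}_{p,A}}\otimes\rho_{\mathcal{H}_{p+1,A}}.$$
Additivity of the von Neumann entropy on tensor products then gives $S_{\mathcal{H}_A}(\ket{\chi})=S_{\mathcal{H}_{p,A}}(\ket{\psi}_p)+S_{\mathcal{H}_{p+1,A}}(\ket{\psi}_{p+1})$, which combined with the previous step proves the theorem. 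I expect the main obstacle to be the first step: because $\mathcal{U}_A$ and $\mathcal{U}_{A^c}$ fail to be unitary off the symmetric subspace, one must use the $p$-form invariance of $\ket{\psi}_p$ to certify that the Schmidt vectors sit precisely where these operators act isometrically; by comparison the tensor-product computation in the last step is routine.
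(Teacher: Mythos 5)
Your proposal is correct and follows essentially the same route as the paper: factorize $\mathcal{U}$ on the symmetric state via Theorem~\ref{theorem1}, use Lemma~\ref{lemma1} together with the decomposition $k=k_A+k_{A^c}$ (with $k_A\in C_p'(A;\widehat{G})$, $k_{A^c}\in C_p'(A^c;\widehat{G})$) to show that the partial isometries $\mathcal{U}_A$, $\mathcal{U}_{A^c}$ act isometrically where the state is supported, and finish with additivity of entropy on the product state $\rho_{\mathcal{H}_{p,A}}\otimes\rho_{\mathcal{H}_{p+1,A}}$. The only difference is cosmetic: where the paper absorbs $\mathcal{U}_{A^c}$ by cyclicity of the partial trace and then argues conjugation by $\mathcal{U}_A$ preserves the entropy, you track the Schmidt coefficients directly, which handles the two sides symmetrically and makes explicit the isometry-on-support point that the paper compresses into ``Similarly.''
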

\begin{proof}
    By Theorem \ref{theorem1}, if $\Im(\begin{bmatrix}
    i_A^*\\
    -i_B^*
\end{bmatrix})$ is the graph of an isomorphism, $\mathcal{U}\left(\ket{\psi}_p\otimes \ket{\psi}_{p+1}\right)=(\mathcal{U}_A\otimes \mathcal{U}_{A^c})(\ket{\psi}_p\otimes \ket{\psi}_{p+1})$. From the cyclic property of the trace, the reduced density matrix of $\mathcal{U}\left(\ket{\psi}_p\otimes \ket{\psi}_{p+1}\right)$ on $\mathcal{H}_A$ satisfies
    \eqalign{
    &\Tr_{\mathcal{H}_{A^c}}\left[\mathcal{U}\left(\ket{\psi}_p\bra{\psi}_p\otimes \ket{\psi}_{p+1}\bra{\psi}_{p+1}\right)\mathcal{U}^{\dagger}\right]\\
    =&\Tr_{\mathcal{H}_{A^c}} \left[ \mathcal{U}_A \otimes (\mathcal{U}_{A^c}^{\dagger} \mathcal{U}_{A^c}) \left(\ket{\psi}_p\bra{\psi}_p\otimes \ket{\psi}_{p+1}\bra{\psi}_{p+1}\right)\mathcal{U}_A^{\dagger} \right]\\
    =&\mathcal{U}_A\Tr_{\mathcal{H}_{A^c}} \left[\sum_{k_{A^c}\in C_p'(A^c;\widehat{G})} P_{A^c} (k_{A^c})\ket{\psi}_p\bra{\psi}_p\otimes \ket{\psi}_{p+1}\bra{\psi}_{p+1}  \right]\mathcal{U}_A^{\dagger}.
    }
    The definitions of $C_p'(A^c;\widehat{G})$ and $P_{A^c}(k_{A^c})$ were given in the proof of Theorem \ref{theorem1}. Since $\ket{\psi}_p$ is invariant under the $p$-form symmetry transformations, by Lemma \ref{lemma1},
    \eqalign{
    \sum_{k_{A^c}\in C_p'(A^c;\widehat{G})} P_{A^c} (k_{A^c})\ket{\psi}_p=\sum_{k_{A^c}\in C_p'(A^c;\widehat{G})} P_{A^c} (k_{A^c})\sum_{k\in B_p(X;\widehat{G})} P(k)\ket{\psi}_p.
    }
    It was demonstrated in the proof of Theorem \ref{theorem1} that each $k\in B_p(X;\widehat{G})$ can be uniquely decomposed as $k_A+k_{A^c}$ where $k_{A^c}\in C_p'(A^c;\widehat{G})$. Thus,
    \eqalign{
    \sum_{k_{A^c}\in C_p'(A^c;\widehat{G})} P_{A^c} (k_{A^c})\ket{\psi}_p = \sum_{k\in B_p(X;\widehat{G})} P(k)\ket{\psi}_p=\ket{\psi}_p,
    }
    and
    \eqalign{
    \Tr_{\mathcal{H}_{A^c}}\left[\mathcal{U}\left(\ket{\psi}_p\bra{\psi}_p\otimes \ket{\psi}_{p+1}\bra{\psi}_{p+1}\right)\mathcal{U}^{\dagger}\right]
    =\mathcal{U}_A \Tr_{\mathcal{H}_{A^c}}\left(\ket{\psi}_p\bra{\psi}_p\otimes \ket{\psi}_{p+1}\bra{\psi}_{p+1}\right)\mathcal{U}_A^{\dagger}.
    }
    Similarly,
    \eqalign{
    &S_{\mathcal{H}_A}\left(\mathcal{U}\left(\ket{\psi}_p\otimes \ket{\psi}_{p+1}\right)\right)\\
    =&-\Tr_{\mathcal{H}_A}\left\{\mathcal{U}_A \Tr_{\mathcal{H}_{A^c}}\left(\ket{\psi}_p\bra{\psi}_p\otimes \ket{\psi}_{p+1}\bra{\psi}_{p+1}\right)\mathcal{U}_A^{\dagger}\ln \left[\mathcal{U}_A \Tr_{\mathcal{H}_{A^c}}\left(\ket{\psi}_p\bra{\psi}_p\otimes \ket{\psi}_{p+1}\bra{\psi}_{p+1}\right)\mathcal{U}_A^{\dagger}\right]\right\}\\
    =&-\Tr_{\mathcal{H}_A}\left[\Tr_{\mathcal{H}_{A^c}}\left(\ket{\psi}_p\bra{\psi}_p\otimes \ket{\psi}_{p+1}\bra{\psi}_{p+1}\right)\ln \Tr_{\mathcal{H}_{A^c}}\left(\ket{\psi}_p\bra{\psi}_p\otimes \ket{\psi}_{p+1}\bra{\psi}_{p+1}\right)\right]\\
    =&S_{\mathcal{H}_A}\left(\ket{\psi}_p\otimes \ket{\psi}_{p+1}\right).
    }
    The direct-product form of $\ket{\psi}_p\otimes \ket{\psi}_{p+1}$ indicates
    \eqalign{
    S_{\mathcal{H}_A}\left(\mathcal{U}\left(\ket{\psi}_p\otimes \ket{\psi}_{p+1}\right)\right)=S_{\mathcal{H}_A}\left(\ket{\psi}_p\otimes \ket{\psi}_{p+1}\right)=S_{\mathcal{H}_{p,A}}\left(\ket{\psi}_p\right)+S_{\mathcal{H}_{p+1,A}}\left(\ket{\psi}_{p+1}\right).
    }
\end{proof}

\section{Examples of the Entanglement Sum Rule}\label{section example}
Conjugation by $\mathcal{U}$ implements the minimal coupling prescription between two models with respective symmetries. This explains the previously discovered entanglement sum rules in models with minimal coupling between fermions and 1-form $\mathbb{Z}_2$ gauge field. We can construct new examples of entanglement sum rules by gauging higher-form symmetries corresponding to any finite abelian group.
\subsection{Fermions Coupled to $\mathbb{Z}_2$ Gauge Field}\label{subsec review}
Let's see how our theorem recovers results in fermion models coupled to $\mathbb{Z}_2$ gauge field \cite{Swingle2013}. The decoupled Hamiltonian is
\eqalign{\label{decoupled1}
    H_0 = &-w\sum_{l\in\Delta^1(X)} (c^{\dagger}_{l^+}c_{l^-}+\text{h.c.})- \mu\sum_{v\in\Delta^0(X)} c^{\dagger}_{v}c_v \\
    &-J\sum_{l\in\Delta^1(X)}\sigma_l^z-gJ\sum_{v\in\Delta^0(X)} \prod_{l\ni v} \sigma_l^x - V\sum_{f\in\Delta^2(X)}\prod_{l\in f}\sigma^z_l,
}
where $\partial l= l^+-l^-$ if we view $l$ as an element of $C(X)$. This Hamiltonian has a 0-form symmetry corresponding to the fermion parity on each connected component and a $(d-2)$-form symmetry corresponding to the product of $\sigma^z_l$ on closed loops. The minimal coupling operator \eqref{unitary} should be written as \cite{Swingle2013}
\eqalign{
&\mathcal{U}= \sum_{k=\partial k'\in B_0(X;\mathbb{Z}_2)} P(k) \tilde{U}(k'),\\
&P(k) = \prod_{v\in \Delta^0(X)} \frac{1+(-1)^{k_{v}}(-1)^{c^{\dagger}_v c_v}}{2},\\
&\tilde{U}(k') = \prod_{l\in\Delta^1(X)} (\sigma^z_l)^{k'_{l}}.
}

We now derive $\mathcal{U} H_0 \mathcal{U}^{\dagger}$ restricted to the subspace $\mathcal{H}_{\text{inv}}$ invariant under the 0-form and $(d-2)$-form symmetry transformations. The $(d-2)$-form symmetry ensures $\mathcal{U}$ is independent of the choices of $k'$. All terms of $H_0$, except the $-w$ and $-gJ$ terms, commute with $\mathcal{U}$. For the $-w$ terms, observe that
\eqalign{
    P(k) c^{\dagger}_{l^+}c_{l^-} = c^{\dagger}_{l^+}c_{l^-} P(k-\partial l).
}
Thus, according to \eqref{gauged}, $q=l^+-l^-$, $q' = l$, we have
\eqalign{
\mathcal{U} c^{\dagger}_{l^+} c_{l^-} \mathcal{U}^{\dagger} = c^{\dagger}_{l^+}\sigma^z_l c_{l^-}.
}
Note that we omitted the projection operator $\sum_{k\in B_0(X;\mathbb{Z}_2)}P(k)$ since we are restricted to $\mathcal{H}_{\text{inv}}$.

The $-gJ$ terms commute with $P(k)$. If $k_{v}=0$, $P(k)$ projects onto the subspace with $c^{\dagger}_vc_v=0$. Correspondingly, $\tilde{U}(k')= \prod_{l\in\Delta^1(X)} (\sigma^z_l)^{k'_{l}}$ acts with $\sigma^z_l$ on an even number of neighboring links of $v$, then $\tilde{U}(k')\prod_{l\ni v}\sigma^x_l \tilde{U}^{\dagger}(k') = \prod_{l\ni v}\sigma^x_l$. If $k_{v}=1$, $P(k)$ projects onto the subspace with $c^{\dagger}_vc_v=1$ and $\tilde{U}(k')\prod_{l\ni v}\sigma^x_l \tilde{U}^{\dagger}(k') = -\prod_{l\ni v}\sigma^x_l$. Thus, $\mathcal{U}\prod_{l\ni v} \sigma^x_l \mathcal{U}^{\dagger} = (-1)^{c^{\dagger}_v c_v} \prod_{l\ni v}\sigma^x_l$. Again, $\sum_{k\in B_0(X;\mathbb{Z}_2)}P(k)$ is omitted since we are restricted to $\mathcal{H}_{\text{inv}}$.

In summary, in the symmetric subspace $\mathcal{H}_{\text{inv}}$,
\eqalign{
    \mathcal{U} H_0 \mathcal{U}^{\dagger} =& -w\sum_{l\in\Delta^1(X)} (c^{\dagger}_{l^+}\sigma_l^zc_{l^-}+\text{h.c.})-\mu\sum_{v\in\Delta^0(X)}c^{\dagger}_v c_v\\
    & - J\sum_{l\in\Delta^{1}(X)}\sigma_l^z-gJ\sum_{v\in\Delta^0(X)}(-1)^{c^{\dagger}_v c_v}\prod_{l\ni v}\sigma^x_l-V\sum_{f\in\Delta^{2}(X)}\prod_{l\in f}\sigma_l^z\\
    \equiv&H.
}
$H$ is defined in $\mathcal{H}$ with the same expression as above. 

As $p=0$, we consider $1$-dimensional subcomplexes $A$ and $B$ of $X$ and factorize the Hilbert space following Section \ref{subsec sum rule}. Consider the symmetric eigenstates of $H$, $\{\mathcal{U}(\ket{\psi}_p\otimes\ket{\psi}_{p+1})\}$, where $\{\ket{\psi}_p\}$ and $\{\ket{\psi}_{p+1}\}$ are symmetric eigenstates of the decoupled fermions and $\mathbb{Z}_2$ gauge field (the first and second lines of \eqref{decoupled1}). From Theorem \ref{theorem2}, if $\Im(\begin{bmatrix}
    i^*_A\\
    -i^*_B
\end{bmatrix})$ is the graph of an isomorphism, these eigenstates of $H$ satisfy the entanglement sum rule. In particular, a sufficient condition is that $A$ and $B$ are connected.

Next, let's discuss the condition of the ground state of $H$ being in the symmetric subspace, $\mathcal{H}_{\text{inv}}$. Observe that the plaquette operator $\prod_{l\in f}\sigma_l^z$ is a stabilizer that commutes with all other terms, when $V>0$, the ground state must be in the subspace where $\prod_{l\in f}\sigma_l^z=1,\ \forall f\in\Delta^{2}(X)$. If $H_{1}(X;\mathbb{Z}_2)$ is trivial, the plaquette operators generate all $(d-2)$-form symmetry transformations, and the ground state of $H$ must be invariant under the $(d-2)$-form symmetry transformations. For the fermion parity symmetry, although it is not energetically enforced, it could be tuned with the chemical potential.

To see more clearly how $\mathcal{U}$ factorizes in $\mathcal{H}_{\text{inv}}$, we can explicitly construct $\mathcal{U}_{A}$ and $\mathcal{U}_{A^c}$ following the proof of Theorem \ref{theorem1}. Suppose the complementary subregions $A$ and $B$ are connected. Since the boundary of a 0-chain is trivial, we do not need $h$. $[k_A]$ and $[k_{A^c}]$ indicate the parity of fermions in two complementary subregions. Since we are restricted to the symmetric subspace, they are either both even or both odd. In the both-even case, $t_A([k_A])=t_B([k_{A^c}])=0$. In the both-odd case, $t_A([k_A])=t_B([k_{A^c}])=v_0$, where $v_0$ is a vertex on the boundary determined once and for all. $f_1(k_A)=l_A(k_A-t_A([k_A]))$ is a preimage of $k_A-t_A([k_A])$ under the boundary map. Such a preimage always exists because $A$ is connected. $f_2(k_{A^c})=l_B(k_{A^c}+t_B([k_{A^c}]))$ is obtained similarly. Then we can use $f_1$ and $f_2$ to construct $\mathcal{U}_A$ and $\mathcal{U}_{A^c}$ following the proof of Theorem \ref{theorem1}. In fact, this is exactly how the entanglement sum rule was proved for this Hamiltonian \cite{Swingle2013}.

\subsection{Gauging the $(3+1)$D Transverse-Field Toric Code}\label{subsec new example}
It is known that by gauging a $p$-form symmetry corresponding to an abelian group $G$, we obtain a model with $(d-p-2)$-form symmetry corresponding to $\widehat{G}$ \cite{Haegeman_2015, Luo_2024, Sakura2023}. Then we can stack two models together and add minimal coupling with $\mathcal{U}$. This allows us to “mass-produce” models satisfying the entanglement sum rule.

Take $(3+1)$D transverse-field toric code as an example. The space $X$ is a 3-dimensional $\Delta$-complex. The degrees of freedom are spin-1/2 on links. The Hamiltonian is
\eqalign{
    H = -\Delta_F \sum_{f\in\Delta^2(X)} \prod_{l\in f} \sigma^z_l - \Delta_V \sum_{v\in\Delta^0(X)} \prod_{l\ni v}\sigma_l^x-\Delta_L\sum_{l\in \Delta^1(X)}\sigma_l^x.
}
This Hamiltonian has a 1-form symmetry corresponding to products of $\sigma^x_l$ on links that cross a closed surface in the dual complex,
\eqalign{\label{1-form1}
    \{\prod_{l\in \Delta^1(X)}(\sigma^x_l)^{\phi(l)}\mid \phi\in Z^1(X;\mathbb{Z}_2)\}.
}
To gauge the 1-form symmetry, we enlarge the Hilbert space by introducing spin-1/2 degrees of freedom on 2-simplices and define the generators of gauge transformations $\{\sigma^x_l\prod_{f\ni l} \sigma^x_f\mid l\in \Delta^1(X)\}$. To gauge the Hamiltonian, we embed $H$ into the operators on the enlarged Hilbert space, project it onto the subspace that is invariant under gauge transformations. The vertex terms $\prod_{l\ni v}\sigma_l^x$ become constants and will be discarded. Then we add magnetic flux terms with $-\Delta_V$ as their factor (see \cite{Haegeman_2015, Barkeshli_2019, Choi_2025} for more details). This gives us
\eqalign{
    H_{\text{gauged}}'=-\Delta_F \sum_{f\in\Delta^2(X)} \sigma^z_f\prod_{l\in f}\sigma^z_l  -\Delta_L \sum_{l\in\Delta^1(X)}\prod_{f\ni l}\sigma^x_f- \Delta_V\sum_{c\in \Delta^3(X)} \prod_{f\in c}\sigma^z_f.
}
The gauge-invariant subspace is spanned by equal-weight superpositions of gauge-equivalent states where spins on links and plaquettes are in $\sigma^z=\pm 1$. In each set of gauge-equivalent states, there is always one state where the spin on each link is in $\sigma^z=1$. Thus, the gauge-invariant subspace is isomorphic to the tensor product of local Hilbert spaces on plaquettes. The Hamiltonian corresponding to $H_{\text{gauged}}'$ on this Hilbert space is
\eqalign{
    H_{\text{gauged}}= -\Delta_L \sum_{l\in\Delta^1(X)}\prod_{f\ni l}\sigma^x_f - \Delta_V\sum_{c\in \Delta^3(X)} \prod_{f\in c}\sigma^z_f-\Delta_F \sum_{f\in\Delta^2(X)} \sigma^z_f.
}
$H_{\text{gauged}}$ has a 1-form symmetry corresponding to products of $\sigma^z_f$ on a closed surface,
\eqalign{\label{1-form2}
    \{\prod_{f\in\Delta^2(X)}(\sigma^z_f)^{k_f}\mid k\in Z_2(X;\mathbb{Z}_2)\}.
}
\begin{remark}
    Note that $H_{\text{gauged}}$ is equivalent to the original Hamiltonian $H$ with $X$ replaced by its dual complex, $\sigma^x$ switched with $\sigma^z$, and $\Delta_F$ switched with $\Delta_L$. Thus, we have shown the $(3+1)$D transverse-field toric code is self-dual under gauging the 1-form symmetry, up to exchanging electric and magnetic variables.
\end{remark}
By stacking $H$ and $H_{\text{gauged}}$ together, we get
\eqalign{\label{decoupled2}
    H_0 =& -\Delta_F \sum_{f\in\Delta^2(X)} \prod_{l\in f} \sigma^z_l - \Delta_V \sum_{v\in\Delta^0(X)} \prod_{l\ni v}\sigma_l^x-\Delta_L\sum_{l\in \Delta^1(X)}\sigma_l^x\\
    &-\Delta_L' \sum_{l\in\Delta^1(X)}\prod_{f\ni l}\sigma^x_f - \Delta_V' \sum_{c\in \Delta^3(X)} \prod_{f\in c}\sigma^z_f-\Delta_F' \sum_{f\in\Delta^2(X)} \sigma^z_f.
}
Note that the parameters do not need to match. But we require $\Delta_V, \Delta_V'>0$ to enforce two 1-form symmetries energetically. The minimal coupling operator \eqref{unitary} is
\eqalign{
    &\mathcal{U} = \sum_{k=\partial k'\in B_1(X;\mathbb{Z}_2)}P(k)\tilde{U}(k'),\\
    &P(k)=\prod_{l\in \Delta^1(X)} \frac{1+(-1)^{k_l}\sigma^x_l}{2},\\
    &\tilde{U}(k')=\prod_{f\in \Delta^2(X)}(\sigma^z_f)^{k'_f}.
}

We now derive $\mathcal{U} H_0 \mathcal{U}^{\dagger}$ restricted to the subspace $\mathcal{H}_{\text{inv}}$ invariant under both 1-form symmetries. The 1-form symmetry \eqref{1-form2} ensures that $\mathcal{U}$ is independent of the choices of $k'$. All terms of $H_0$, except the $-\Delta_F$ and $-\Delta_L'$ terms, commute with $\mathcal{U}$. For the $-\Delta_F$ terms, note that $P(k)\prod_{l\in f}\sigma^z_l =\prod_{l\in f}\sigma^z_l P(k-\partial f)$. Thus, according to \eqref{gauged},
\eqalign{
    \mathcal{U} \prod_{l\in f} \sigma^z_l \mathcal{U}^{\dagger} = \sigma^z_f\prod_{l\in f} \sigma^z_l.
}
We omitted the projection operator $\sum_{k\in B_1(X;\mathbb{Z}_2)}P(k)$ since we are restricted to $\mathcal{H}_{\text{inv}}$. 

The $-\Delta'_L$ terms commute with $P(k)$. If $k_{l} = 0$, $P(k)$ projects onto the subspace with $\sigma^x_l=1$. Correspondingly, $\tilde{U}(k')=\prod_{f\in \Delta^2(X)}(\sigma^z_f)^{k'_f}$ acts with $\sigma^z$ on an even number of neighboring plaquettes of $l$, so $\tilde{U}(k')\prod_{f\ni l} \sigma^x_f \tilde{U}^{\dagger}(k')= \prod_{f\ni l} \sigma^x_f$. On the other hand, if $k_l=1$, $P(k)$ projects onto the subspace with $\sigma^x_l=-1$ and $\tilde{U}(k')\prod_{f\ni l} \sigma^x_f \tilde{U}^{\dagger}(k')= -\prod_{f\ni l} \sigma^x_f$. Thus, $\mathcal{U}\prod_{f\ni l} \sigma^x_f \mathcal{U}^{\dagger}=\sigma^x_l \prod_{f\ni l} \sigma^x_f $. As before, $\sum_{k\in B_1(X;\mathbb{Z}_2)}P(k)$ is omitted since we are restricted to $\mathcal{H}_{\text{inv}}$.

In summary, in the symmetric subspace $\mathcal{H}_{\text{inv}}$,
\eqalign{
    \mathcal{U} H_0 \mathcal{U}^{\dagger}=&-\Delta_F \sum_{f\in\Delta^2(X)}\sigma^z_f \prod_{l\in f} \sigma^z_l - \Delta_V \sum_{v\in\Delta^0(X)} \prod_{l\ni v}\sigma_l^x-\Delta_L\sum_{l\in \Delta^1(X)}\sigma_l^x\\
    &-\Delta_L'\sum_{l\in\Delta^1(X)}\sigma_l^x\prod_{f\ni l}\sigma^x_f - \Delta_V' \sum_{c\in \Delta^3(X)} \prod_{f\in c}\sigma^z_f-\Delta_F' \sum_{f\in\Delta^2(X)} \sigma^z_f\\
    \equiv &H.
}
$H$ is defined in $\mathcal{H}$ with the same expression as above. 

Note that the vertex operator $\prod_{l\ni v}\sigma^x_l,v\in \Delta^0(X)$ and the tetrahedron operator $\prod_{f\in c}\sigma^z_f,c\in\Delta^3(X)$ are stabilizers that commute with all other terms. We have assumed $\Delta_V,\Delta_V'>0$. Thus, the ground state of $H$ must be in the subspace where $\prod_{l\ni v}\sigma^x_l=1,\forall v\in\Delta^0(X)$, $\prod_{f\in c}\sigma^z_f=1,\forall c\in\Delta^3(X)$. If $H^1(X;\mathbb{Z}_2)$ and $H_2(X;\mathbb{Z}_2)$ are trivial, the vertex and tetrahedron operators generate all 1-form symmetry transformations, and the ground state of $H$ must be in $\mathcal{H}_{\text{inv}}$. The ground states of $H$ can be written as $\{\mathcal{U}(\ket{GS}_p\otimes\ket{GS}_{p+1})\}$, where $\{\ket{GS}_p\}$ and $\{\ket{GS}_{p+1}\}$ are symmetric ground states of the decoupled subsystems (the first and second lines of \eqref{decoupled2}). Following Theorem \ref{theorem2}, these ground states of $H$ satisfy the entanglement sum rule, provided the bipartition $(A,B)$ satisfies the Mayer--Vietoris criterion. In particular, a sufficient condition is that $A$ and $B$ are simply connected.

Note that our theorems rely only on symmetries and topology. Adding arbitrary polynomials of $\sigma^x_l$ on links and $\sigma^z_f$ on plaquettes to $H_0$ and $H$ simultaneously does not affect the above conclusions, since these terms preserve the 1-form symmetries and commute with $\mathcal{U}$.

\section{Discussion and Outlook}
In this work, we have established a generalized entanglement sum rule arising from higher-form symmetries in quantum lattice models. By constructing a minimal coupling operator $\mathcal{U}$ that (de)couples subsystems and showing its factorization under the higher-form symmetries and homology conditions, we provide a mechanism for entanglement preservation in minimally coupled systems. This explains and extends previous results on sum rules in fermion-$\mathbb{Z}_2$ gauge theories \cite{Swingle2013, Yao_2010} and enables new constructions by gauging, as demonstrated in the $(3+1)$D toric code example.

Our findings highlight the intimate connection between higher-form symmetries, homology, and entanglement structures, aligning with recent advances in understanding entanglement in gauge theories \cite{Ibieta_Jimenez_2020, Xu_2025}.

The Mayer--Vietoris sequence has played a central role in our derivation. Its connection with the entanglement entropy is not surprising, as it relates the topologies of the whole space, two subregions, and their boundary. Its interplay with symmetry and entanglement is worth further exploration. The extension of the entanglement sum rule to continuous symmetries or continuous space is also promising.

\section{Acknowledgements}
I thank Ran Luo, Zhaolong Zhang and Junkai Wang for helpful discussions.

\appendix
\section{Stokes' Theorem and Perfect Pairing}\label{appendixA}
In this section,  $n=0,1,...,d-2$. Let $I_{n+1}$ and $I_n$ be the index sets of $ \Delta^{n+1}(X)$ and $ \Delta^n(X)$. Suppose $\partial \sigma^{n+1}_i=\sum_{j\in I_n}s_{ij} \sigma^n_j$. $\forall k'\in C_{n+1}(X;\widehat{G})$ and $\phi \in C^n(X;G)$,
\eqalign{
    \braket{\partial k',\phi}=&\braket{\sum_{i\in I_{n+1}}\sum_{j\in I_n} k'_{\sigma_i^{n+1}} s_{ij}\sigma^n_j, \phi}=\sum_{i\in I_{n+1}}\sum_{j\in I_n} k'_{\sigma_i^{n+1}}(\phi(\sigma^n_j)) s_{ij},\\
    \braket{k',\delta\phi}=&\sum_{i\in I_{n+1}}k'_{\sigma_i^{n+1}}(\delta\phi(\sigma^{n+1}_i))=\sum_{i\in I_{n+1}}k'_{\sigma_i^{n+1}} (\phi(\partial\sigma_i^{n+1}))
    =\sum_{i\in I_{n+1}}\sum_{j\in I_n} k'_{\sigma^{n+1}_i}(\phi(\sigma^n_j)) s_{ij}.
}
Thus, $\braket{\partial k',\phi}=\braket{k',\delta\phi}$, which is a generalized Stokes' theorem \cite{Munkres1984}.

$\braket{\cdot,\cdot}$ induces the following homomorphism,
\eqalign{
    &\Phi: C_n(X;\widehat{G}) \rightarrow \Hom(C^n(X;G),\mathbb{R}/2\pi\mathbb{Z}),\\
    &k\mapsto \Phi(k),\ \Phi(k) (\phi)=\braket{k,\phi}.
}
If $k\neq 0$, $\exists \sigma\in\Delta^n(X),\ g\in G$  such that $k_{\sigma}(g)\neq 0\mod 2\pi$. Define $\phi^{\sigma}_g\in C^n(X;G)$ by $\phi^{\sigma}_g(\sigma')=\delta_{\sigma,\sigma'} g$, then $\Phi(k)(\phi^{\sigma}_g)=k_{\sigma}(g)\neq 0\mod 2\pi$. Thus, $\Phi$ is injective.

Given $f\in \Hom(C^n(X;G),\mathbb{R}/2\pi\mathbb{Z})$, define $k\in C_n(X;\widehat{G})$ by
\eqalign{
    k_{\sigma} (g) = f(\phi^{\sigma}_g).
}
Since $k_{\sigma}(gg')=f(\phi^{\sigma}_{gg'})=f(\phi^{\sigma}_g+\phi^{\sigma}_{g'})=f(\phi^{\sigma}_g) + f(\phi^{\sigma}_{g'})=k_{\sigma}(g) + k_{\sigma}(g')$, $k_{\sigma}\in \widehat{G}$, and $k\in C_n(X;\widehat{G})$. Then $\forall \phi \in C^n(X;G)$, $\Phi(k)(\phi) = f(\phi)$, $f=\Phi(k)$. Thus, $\Phi$ is surjective. Thus, $\Phi$ is an isomorphism. Similarly, $C^n(X;G)\cong \Hom(C_n(X;\widehat{G}),\mathbb{R}/2\pi\mathbb{Z})$. Thus, $\braket{\cdot,\cdot}$ is a perfect pairing between $C_n(X;\widehat{G})$ and $C^n(X;G)$.

For a subgroup of $C^n(X;G)$, $A^n$, define
\eqalign{
    (A^n)^{\perp}=\{k\in C_n(X;\widehat{G})\mid \forall \phi \in A^n, \braket{k,\phi}=0\mod 2\pi\}.
}
Obviously, $(A^n)^{\perp}$ is a subgroup of $C_n(X;\widehat{G})$. For $A_n\le C_n(X;\widehat{G})$,  $(A_n)^{\perp}$ is defined in a similar way. It can be verified that 
\eqalign{
    (B^n(X;G))^{\perp} = Z_n(X;\widehat{G}).
}

If $n=0$, $B^0(X;G)=\{\text{id}\}$, so $(B^0(X;G))^{\perp} = C_0(X;\widehat{G})=Z_0(X;\widehat{G})$. 

If $n>0$, $\forall k\in Z_n(X;\widehat{G}), \phi'\in C^{n-1}(X;G)$, $\braket{k, \delta \phi'}=\braket{\partial k,\phi'}=0\mod 2\pi$. So $Z_n(X;\widehat{G})\subset (B^n(X;G))^{\perp}$. $\forall k \in (B^n(X;G))^{\perp}$, $\forall \phi'\in C^{n-1}(X;G)$, $\braket{\partial k,\phi'} = \braket{k,\delta\phi'}=0\mod 2\pi$. Since $C_{n-1}(X;\widehat{G})\cong \Hom(C^{n-1}(X;G),\mathbb{R}/2\pi\mathbb{Z})$, $\partial k=0$. So $(B^n(X;G))^{\perp}\subset Z_n(X;\widehat{G})$. Thus, $(B^n(X;G))^{\perp}= Z_n(X;\widehat{G})$. 

Since $C^n(X;G)$ and $C_n(X;\widehat{G})$ are finite abelian and the pairing is perfect \cite{rotman2009homological},
\eqalign{
(Z_n(X;\widehat{G}))^\perp=B^n(X;G).}
From similar arguments, we have 
\eqalign{
&(B_n(X;\widehat{G}))^\perp=Z^n(X;G),\\
&(Z^n(X;G))^{\perp} = B_n(X;\widehat{G}).}

\section{Verification of Matter-Gauge Duality}\label{appendixB}
\begin{theorem}
    $\mathcal{U}=\bar{\mathcal{U}}$ in $\mathcal{H}_{\text{inv}}$.
\end{theorem}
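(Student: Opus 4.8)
The plan is to expand both $\mathcal{U}$ and $\bar{\mathcal{U}}$ in the common family of operators $\{P(k)\tilde{P}(\phi')\}$ and to show that, once restricted to $\mathcal{H}_{\text{inv}}$, they produce identical coefficients. The matching of coefficients will be exactly the generalized Stokes' theorem $\braket{\partial k',\phi}=\braket{k',\delta\phi}$ from \eqref{Stokes}.

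First I would Fourier-invert the unitaries into projectors. From $\tilde{U}(k')\tilde{P}(\phi')=e^{i\braket{k',\phi'}}\tilde{P}(\phi')$ together with the completeness relation $\sum_{\phi'\in C^{p+1}(X;G)}\tilde{P}(\phi')=1$ (proved for $\tilde{P}$ exactly as $\sum_k P(k)=1$ is for $P$), one obtains $\tilde{U}(k')=\sum_{\phi'\in C^{p+1}(X;G)}e^{i\braket{k',\phi'}}\tilde{P}(\phi')$; dually, $U(\phi)=\sum_{k\in C_p(X;\widehat{G})}e^{i\braket{k,\phi}}P(k)$. Substituting into the definitions gives
\eqalign{
\mathcal{U}&=\sum_{k=\partial k'\in B_p(X;\widehat{G})}\ \sum_{\phi'\in C^{p+1}(X;G)}e^{i\braket{k',\phi'}}P(k)\tilde{P}(\phi'),\\
\bar{\mathcal{U}}&=\sum_{\phi'=\delta\phi\in B^{p+1}(X;G)}\ \sum_{k\in C_p(X;\widehat{G})}e^{i\braket{k,\phi}}\tilde{P}(\phi')P(k).
}

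Next I would right-multiply each by $P_{\text{inv}}=\left(\sum_{\tilde{k}\in B_p(X;\widehat{G})}P(\tilde{k})\right)\left(\sum_{\tilde{\phi}'\in B^{p+1}(X;G)}\tilde{P}(\tilde{\phi}')\right)$. Since $P(k)$ acts on $\mathcal{H}_p$ and $\tilde{P}(\phi')$ on $\mathcal{H}_{p+1}$ they commute, and the orthogonality relations $P(k)P(\tilde{k})=\delta_{k,\tilde{k}}P(k)$, $\tilde{P}(\phi')\tilde{P}(\tilde{\phi}')=\delta_{\phi',\tilde{\phi}'}\tilde{P}(\phi')$ collapse both double sums to the common index set $k\in B_p(X;\widehat{G})$, $\phi'\in B^{p+1}(X;G)$. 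This yields
\eqalign{
\mathcal{U}P_{\text{inv}}&=\sum_{\substack{k\in B_p(X;\widehat{G})\\ \phi'\in B^{p+1}(X;G)}}e^{i\braket{k',\phi'}}P(k)\tilde{P}(\phi'),\\
\bar{\mathcal{U}}P_{\text{inv}}&=\sum_{\substack{k\in B_p(X;\widehat{G})\\ \phi'\in B^{p+1}(X;G)}}e^{i\braket{k,\phi}}P(k)\tilde{P}(\phi'),
}
with $\partial k'=k$ and $\delta\phi=\phi'$. Stokes' theorem equates the phases, $\braket{k',\phi'}=\braket{k',\delta\phi}=\braket{\partial k',\phi}=\braket{k,\phi}$, so $\mathcal{U}P_{\text{inv}}=\bar{\mathcal{U}}P_{\text{inv}}$, which is the asserted equality on $\mathcal{H}_{\text{inv}}$.

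The step demanding the most care is confirming that the collapsed coefficients are genuinely well-defined despite the non-unique preimages $k'$ (of $k$ under $\partial$) and $\phi$ (of $\phi'$ under $\delta$). I would check that the restriction to boundaries and coboundaries kills the ambiguity: shifting $k'$ by a cycle $z\in Z_{p+1}(X;\widehat{G})$ changes $\braket{k',\phi'}$ by $\braket{z,\delta\phi}=\braket{\partial z,\phi}=0$, and shifting $\phi$ by a cocycle $z^*\in Z^p(X;G)$ changes $\braket{k,\phi}$ by $\braket{\partial k',z^*}=\braket{k',\delta z^*}=0$. This is the same fact, already invoked for $\mathcal{U}$ alone, that the preimage ambiguities are higher-form symmetry transformations acting trivially on $\mathcal{H}_{\text{inv}}$; hence nothing beyond Stokes' theorem and projector orthogonality is required.
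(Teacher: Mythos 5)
Your proof is correct, and it takes a genuinely different --- and notably shorter --- route than the paper's. The paper expands only $\tilde{P}(\phi')$ inside $\bar{\mathcal{U}}$ into the unitaries $\tilde{U}(k')$, applies Stokes' theorem, and then regroups the sums to recognize $\bar{\mathcal{U}}$ as a constant multiple of $\mathcal{U}$ on $\mathcal{H}_{\text{inv}}$; because sums of unitaries (unlike sums of orthogonal projectors) do not collapse redundant terms, this regrouping produces the prefactor $\abs{Z_{p+1}(X;\widehat{G})}\abs{G}^{\abs{\Delta^p(X)}}/\bigl(\abs{Z^p(X;G)}\abs{G}^{\abs{\Delta^{p+1}(X)}}\bigr)$, and the bulk of the paper's proof is the homological-algebra computation showing this constant equals $1$: the structure theorem applied to $M=C_p(X;\mathbb{Z})/B_p(X;\mathbb{Z})$, right-exactness of $-\otimes_{\mathbb{Z}}\widehat{G}$, and the exact sequence $0\to G[d_i]\to G\to d_iG\to 0$. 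You instead expand \emph{both} operators in the commuting orthogonal family $\{P(k)\tilde{P}(\phi')\}$, collapse both double sums onto the common index set $B_p(X;\widehat{G})\times B^{p+1}(X;G)$ by right-multiplying with $P_{\text{inv}}$ and invoking projector orthogonality, and then match coefficients term by term via Stokes; all normalization factors take care of themselves, so the counting argument never arises. Your closing verification that the surviving coefficients $e^{i\braket{k',\phi'}}$ and $e^{i\braket{k,\phi}}$ are independent of the choices of preimages $k'$ and $\phi$ is exactly the point that needs care, and it goes through since the ambiguities are paired against coboundaries and boundaries respectively. The trade-off: your argument isolates the mechanism (Stokes plus orthogonality) and is considerably more economical, while the paper's route establishes, as a byproduct, the cardinality identity $\abs{Z_{p+1}(X;\widehat{G})}/\abs{Z^p(X;G)}=\abs{G}^{\abs{\Delta^{p+1}(X)}-\abs{\Delta^p(X)}}$, an Euler-characteristic-type statement of independent interest that your proof never needs.
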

\begin{proof}
\eqalign{\label{Ubar}
    \bar{\mathcal{U}}=&\sum_{\phi'=\delta\phi\in B^{p+1}(X;G)}\tilde{P}(\phi')U(\phi)\\
    =&\sum_{\phi'=\delta\phi\in B^{p+1}(X;G)}\abs{G}^{-\abs{ \Delta^{p+1}(X)}}\sum_{k'\in C_{p+1}(X;\widehat{G})} e^{-i\braket{k',\delta\phi}} \tilde{U}(k')U(\phi)\\
    =&\abs{G}^{-\abs{ \Delta^{p+1}(X)}}\sum_{\phi'=\delta\phi\in B^{p+1}(X;G)}\sum_{k'\in C_{p+1}(X;\widehat{G})} e^{-i\braket{\partial k',\phi}} \tilde{U}(k')U(\phi)\\
    =&\frac{\abs{Z_{p+1}(X;\widehat{G})}}{\abs{Z^p(X;G)}\cdot \abs{G}^{\abs{ \Delta^{p+1}(X)}}}\sum_{\phi\in C^p(X;G)}\sum_{k=\partial k'\in B_p(X;\widehat{G})} e^{-i\braket{k,\phi}} \tilde{U}(k')U(\phi)\\
    =&\frac{\abs{Z_{p+1}(X;\widehat{G})}\cdot \abs{G}^{\abs{ \Delta^p(X)}}}{\abs{Z^p(X;G)}\cdot \abs{G}^{\abs{ \Delta^{p+1}(X)}}} \sum_{k=\partial k'\in B_p(X;\widehat{G})} P(k) \tilde{U}(k')\\
    =&\frac{\abs{Z_{p+1}(X;\widehat{G})}\cdot \abs{G}^{\abs{ \Delta^p(X)}}}{\abs{Z^p(X;G)}\cdot \abs{G}^{\abs{ \Delta^{p+1}(X)}}}\mathcal{U}.
}
So what is left to prove is $\abs{Z_{p+1}(X;\widehat{G})}/\abs{Z^p(X;G)}=\abs{G}^{\abs{ \Delta^{p+1}(X)}-\abs{ \Delta^p(X)}}$. 

Define $M=C_p(X;\mathbb{Z})/B_p(X;\mathbb{Z})$. Suppose $M\cong \mathbb{Z}^{\beta_p}\oplus \bigoplus_{i=1}^s\mathbb{Z}/d_i\mathbb{Z}$ without loss of generality. The cochain groups are $C^p(X;G)=\Hom(C_p(X;\mathbb{Z}),G)$. $Z^p(X;G)$ consists of the cochains that map $B_p(X;\mathbb{Z})$ to identity. Thus, a cochain in $Z^p(X;G)$ induces a homomorphism from $M$ to $G$. Conversely, given a homomorphism from $M$ to $G$, we can construct an element of $\Hom(C_p(X;\mathbb{Z}),G)$ by first taking quotient of $B_p(X;\mathbb{Z})$, then composing with the homomorphism. Since $B_p(X;\mathbb{Z})$ is mapped to identity, it is an element of $Z^p(X;G)$. These two maps establish the isomorphism, $Z^p(X;G)\cong \Hom(M,G)$.

A homomorphism from $M\cong \mathbb{Z}^{\beta_p}\oplus \bigoplus_{i=1}^s\mathbb{Z}/d_i\mathbb{Z}$ to $G$ is uniquely determined by the image of each generator. But for the $i$th torsion part, the image of a generator can only be in $G[d_i]$ (subgroup of $G$ that satisfies $g^{d_i}=\text{id}$). Thus,
\eqalign{\label{cardinality1}
    \abs{Z^p(X;G)}=\abs{\Hom(M,G)}=\abs{G}^{\beta_p}\prod_{i=1}^s\abs{G[d_i]}.
}
Since $M=C_p(X;\mathbb{Z})/B_p(X;\mathbb{Z})$, we have an exact sequence
\eqalign{
    C_{p+1}(X;\mathbb{Z})\overset{\partial_{\mathbb{Z}}}{\longrightarrow}C_p(X;\mathbb{Z})\overset{q}{\longrightarrow} M\longrightarrow 0.
}
Since tensoring with $\widehat{G}$ is right-exact \cite{atiyah_macdonald_1969}, we have another exact sequence
\eqalign{
    C_{p+1}(X;\mathbb{Z})\otimes_{\mathbb{Z}}\widehat{G}\overset{\partial_{\mathbb{Z}}\otimes_{\mathbb{Z}}\text{id}}{\longrightarrow}C_p(X;\mathbb{Z})\otimes_{\mathbb{Z}}\widehat{G}\overset{q\otimes_{\mathbb{Z}}\text{id}}{\longrightarrow} M\otimes_{\mathbb{Z}}\widehat{G}\longrightarrow 0.
}
From the following commutative diagram
\[
\begin{tikzcd}
C_{p+1}(X;\mathbb Z)\otimes_{\mathbb{Z}} \widehat{G} \arrow[r, "\partial_{\mathbb{Z}}\otimes_{\mathbb{Z}}\mathrm{id}"] \arrow[d, "\cong"]
& C_p(X;\mathbb Z)\otimes_{\mathbb{Z}} \widehat{G} \arrow[d, "\cong"] \\
C_{p+1}(X;\widehat{G}) \arrow[r, "\partial_{\widehat{G}}"]
& C_p(X;\widehat{G})
\end{tikzcd},
\]
we have $\abs{B_p(X;\widehat{G}) }=\abs{C_p(X;\widehat{G})}/\abs{M\otimes_{\mathbb{Z}}\widehat{G}}=\abs{G}^{\abs{ \Delta^p(X)}}/\abs{M\otimes_{\mathbb{Z}} \widehat{G}}$. Therefore,
\eqalign{\label{cardinality2}
    \abs{Z_{p+1}(X;\widehat{G})}=\frac{\abs{C_{p+1}(X;\widehat{G})}}{\abs{B_p(X;\widehat{G})}}=\abs{G}^{\abs{ \Delta^{p+1}(X)}-\abs{ \Delta^p(X)}}\abs{M\otimes_{\mathbb{Z}} \widehat{G}}.
}
Since $M\otimes \widehat{G}\cong \widehat{G}^{\beta_p}\oplus \bigoplus_{i=1}^s(\widehat{G}/d_i\widehat{G})$ and $G\cong \widehat{G}$, $\abs{M\otimes_{\mathbb{Z}} \widehat{G}}=\abs{G}^{\beta_p}\prod_{i=1}^s\abs{G}/\abs{d_iG}$. Since $G[d_i]$ are exactly the elements of $G$ whose $d_i$th power is identity, we have the short exact sequence,
\eqalign{
    0\longrightarrow G[d_i]\longrightarrow G\longrightarrow d_iG\longrightarrow 0.
}
Then $\abs{G}/\abs{G[d_i]}=\abs{d_iG}$. Combining with \eqref{cardinality1} and \eqref{cardinality2}, we have
\eqalign{
    \frac{\abs{Z_{p+1}(X;\widehat{G})}}{\abs{Z^p(X;G)}}=\frac{\abs{G}^{\abs{ \Delta^{p+1}(X)}-\abs{ \Delta^p(X)}}\abs{G}^{\beta_p}\prod_{i=1}^s\abs{G}/\abs{d_iG}}{\abs{G}^{\beta_p}\prod_{i=1}^s\abs{G[d_i]}}=\abs{G}^{\abs{ \Delta^{p+1}(X)}-\abs{ \Delta^p(X)}},
}
which is what is left to prove for $\mathcal{U}=\bar{\mathcal{U}}$ in $\mathcal{H}_{\text{inv}}$.
\end{proof}
\bibliographystyle{unsrt}
\bibliography{Biblio/ref}

@article{Haegeman_2015,
	author = {Jutho Haegeman and Karel Van Acoleyen and Norbert Schuch and J. Ignacio Cirac and Frank Verstraete},
	doi = {10.1103/physrevx.5.011024},
	journal = {Physical Review X},
	month = {feb},
	number = {1},
	publisher = {American Physical Society ({APS})},
	title = {Gauging Quantum States: From Global to Local Symmetries in Many-Body Systems},
	url = {https://doi.org/10.1103%2Fphysrevx.5.011024},
	volume = {5},
	year = 2015,
	bdsk-url-1 = {https://doi.org/10.1103%2Fphysrevx.5.011024},
	bdsk-url-2 = {https://doi.org/10.1103/physrevx.5.011024}}

@article{Swingle2013,
  title = {Entanglement Sum Rules},
  author = {Swingle, Brian},
  journal = {Phys. Rev. Lett.},
  volume = {111},
  issue = {10},
  pages = {100405},
  numpages = {5},
  year = {2013},
  month = {Sep},
  publisher = {American Physical Society},
  doi = {10.1103/PhysRevLett.111.100405},
  url = {https://link.aps.org/doi/10.1103/PhysRevLett.111.100405}
}

@article{Yao_2010,
   title={Entanglement Entropy and Entanglement Spectrum of the Kitaev Model},
   volume={105},
   ISSN={1079-7114},
   url={http://dx.doi.org/10.1103/PhysRevLett.105.080501},
   DOI={10.1103/physrevlett.105.080501},
   number={8},
   journal={Physical Review Letters},
   publisher={American Physical Society (APS)},
   author={Yao, Hong and Qi, Xiao-Liang},
   year={2010},
   month=aug }

@article{Choi_2025,
   title={Non-invertible and higher-form symmetries in 2+1d lattice gauge theories},
   volume={18},
   ISSN={2542-4653},
   url={http://dx.doi.org/10.21468/SciPostPhys.18.1.008},
   DOI={10.21468/scipostphys.18.1.008},
   number={1},
   journal={SciPost Physics},
   publisher={Stichting SciPost},
   author={Choi, Yichul and Sanghavi, Yaman and Shao, Shu-Heng and Zheng, Yunqin},
   year={2025},
   month=jan }

@article{Luo_2024,
   title={Lecture notes on generalized symmetries and applications},
   volume={1065},
   ISSN={0370-1573},
   url={http://dx.doi.org/10.1016/j.physrep.2024.02.002},
   DOI={10.1016/j.physrep.2024.02.002},
   journal={Physics Reports},
   publisher={Elsevier BV},
   author={Luo, Ran and Wang, Qing-Rui and Wang, Yi-Nan},
   year={2024},
   month=may, pages={1–43} }

@misc{Sakura2023,
      title={ICTP Lectures on (Non-)Invertible Generalized Symmetries}, 
      author={Sakura Schafer-Nameki},
      year={2023},
      eprint={2305.18296},
      archivePrefix={arXiv},
      primaryClass={hep-th},
      url={https://arxiv.org/abs/2305.18296}, 
}

@book{Hatcher,
  author       = {Hatcher, Allen},
  title        = {Algebraic Topology},
  publisher    = {Cambridge University Press},
  address      = {Cambridge},
  year         = {2002},
  pages        = {xii+544},
  isbn         = {0-521-79160-X; 0-521-79540-0},
  mrnumber     = {1867354 (2002k:55001)},
  mrreviewer   = {Donald W. Kahn},
}

@book{rotman2009homological,
  author       = {Joseph J. Rotman},
  title        = {An Introduction to Homological Algebra},
  series       = {Universitext},
  publisher    = {Springer Science+Business Media},
  address      = {New York},
  year         = {2009},
  isbn         = {9780387245270},
  isbn13       = {978-0-387-24527-0},
  doi          = {10.1007/978-0-387-68324-9}
}

@article{Gaiotto_2015,
   title={Generalized global symmetries},
   volume={2015},
   ISSN={1029-8479},
   url={http://dx.doi.org/10.1007/JHEP02(2015)172},
   DOI={10.1007/jhep02(2015)172},
   number={2},
   journal={Journal of High Energy Physics},
   publisher={Springer Science and Business Media LLC},
   author={Gaiotto, Davide and Kapustin, Anton and Seiberg, Nathan and Willett, Brian},
   year={2015},
   month=feb }

@article{McGreevy2023,
   title={Generalized Symmetries in Condensed Matter},
   volume={14},
   ISSN={1947-5462},
   url={http://dx.doi.org/10.1146/annurev-conmatphys-040721-021029},
   DOI={10.1146/annurev-conmatphys-040721-021029},
   number={1},
   journal={Annual Review of Condensed Matter Physics},
   publisher={Annual Reviews},
   author={McGreevy, John},
   year={2023},
   month=mar, pages={57–82} }

@article{Xu_2025,
   title={Entanglement Properties of Gauge Theories from Higher-Form Symmetries},
   volume={15},
   ISSN={2160-3308},
   url={http://dx.doi.org/10.1103/PhysRevX.15.011001},
   DOI={10.1103/physrevx.15.011001},
   number={1},
   journal={Physical Review X},
   publisher={American Physical Society (APS)},
   author={Xu, Wen-Tao and Rakovszky, Tibor and Knap, Michael and Pollmann, Frank},
   year={2025},
   month=jan }

@misc{Aoki2015,
      title={On the definition of entanglement entropy in lattice gauge theories}, 
      author={Sinya Aoki and Takumi Iritani and Masahiro Nozaki and Tokiro Numasawa and Noburo Shiba and Hal Tasaki},
      year={2015},
      eprint={1502.04267},
      archivePrefix={arXiv},
      primaryClass={hep-th},
      url={https://arxiv.org/abs/1502.04267}, 
}

@article{Barkeshli_2019,
   title={Symmetry fractionalization, defects, and gauging of topological phases},
   volume={100},
   ISSN={2469-9969},
   url={http://dx.doi.org/10.1103/PhysRevB.100.115147},
   DOI={10.1103/physrevb.100.115147},
   number={11},
   journal={Physical Review B},
   publisher={American Physical Society (APS)},
   author={Barkeshli, Maissam and Bonderson, Parsa and Cheng, Meng and Wang, Zhenghan},
   year={2019},
   month=sep }

@article{Jenkins_2013,
   title={On gauge invariance and minimal coupling},
   volume={2013},
   ISSN={1029-8479},
   url={http://dx.doi.org/10.1007/JHEP09(2013)063},
   DOI={10.1007/jhep09(2013)063},
   number={9},
   journal={Journal of High Energy Physics},
   publisher={Springer Science and Business Media LLC},
   author={Jenkins, Elizabeth E. and Manohar, Aneesh V. and Trott, Michael},
   year={2013},
   month=sep }

@book{Serre1977,
  author       = {Jean-Pierre Serre},
  title        = {Linear Representations of Finite Groups},
  publisher    = {Springer},
  year         = {1977},
  series       = {Graduate Texts in Mathematics},
  volume       = {42},
  address      = {New York},
  doi          = {10.1007/978-1-4684-9458-7},
  isbn         = {978-0-387-90190-6},
  note         = {Translated from the French by Leonard L. Scott}
}

@book{atiyah_macdonald_1969,
  title        = {Introduction to Commutative Algebra},
  author       = {Atiyah, M. F. and Macdonald, I. G.},
  year         = {1969},
  publisher    = {Addison-Wesley Publishing Company},
  address      = {Reading, Massachusetts; London; Don Mills, Ontario},
  series       = {Addison-Wesley Series in Mathematics},
  pages        = {128},
  note         = {1st ed.},
}

@article{Ibieta_Jimenez_2020,
   title={Topological entanglement entropy in d-dimensions for Abelian higher gauge theories},
   volume={2020},
   ISSN={1029-8479},
   url={http://dx.doi.org/10.1007/JHEP03(2020)167},
   DOI={10.1007/jhep03(2020)167},
   number={3},
   journal={Journal of High Energy Physics},
   publisher={Springer Science and Business Media LLC},
   author={Ibieta-Jimenez, J.P. and Petrucci, M. and Xavier, L.N. Queiroz and Teotonio-Sobrinho, P.},
   year={2020},
   month=mar }

@book{Munkres1984,
  author    = {James R. Munkres},
  title     = {Elements of Algebraic Topology},
  year      = {1984},
  publisher = {Addison–Wesley},
  address   = {Menlo Park, CA}
}

@article{Donnelly_2012,
   title={Decomposition of entanglement entropy in lattice gauge theory},
   volume={85},
   ISSN={1550-2368},
   url={http://dx.doi.org/10.1103/PhysRevD.85.085004},
   DOI={10.1103/physrevd.85.085004},
   number={8},
   journal={Physical Review D},
   publisher={American Physical Society (APS)},
   author={Donnelly, William},
   year={2012},
   month=apr }

@book{HewittRoss1979,
author       = {Edwin Hewitt and Kenneth A. Ross},
title        = {Abstract Harmonic Analysis: Volume I: Structure of Topological Groups Integration Theory Group Representations},
publisher    = {Springer New York},
year         = {1979},
edition      = {2},
series       = {Grundlehren der mathematischen Wissenschaften},
volume       = {115},
doi          = {10.1007/978-1-4419-8638-2},
isbn         = {978-1-4419-8638-2}
}
\end{document}